\newtheorem{Theorem}{Theorem}
\newtheorem{Lemma}{Lemma}
\newtheorem{Definition}{Definition}
\newtheorem{Example}{Example}
\newtheorem{Corollary}{Corollary}
\newtheorem{Fact}{Fact}
\theoremstyle{remark}
\newcolumntype{L}[1]{>{\raggedright\arraybackslash}p{#1}}
\newcolumntype{C}[1]{>{\centering\arraybackslash}p{#1}}
\newcolumntype{R}[1]{>{\raggedleft\arraybackslash}p{#1}}
\title{Declination as a Metric to Detect Partisan Gerrymandering}
\author{Marion Campisi \\
\small Department of Mathematics \\[-.8ex]
\small San Jose State University \\[-0.8ex] 
\small \tt marion.campisi@sjsu.edu
\and
Tommy Ratliff \\
\small Department of Mathematics \\[-.8ex]
\small Wheaton College  \\[-.8ex]
\small \tt ratliff\_thomas@wheatoncollege.edu
\and
Andrea Padilla \\
\small \tt ap27@stmarys-ca.edu
\and
Ellen Veomett \\
\small Dept of Mathematics and Computer Science \\[-0.8ex]
\small Saint Mary's College of California \\[-0.8ex]
\small \tt erv2@stmarys-ca.edu
}
\begin{document}

\maketitle

\begin{abstract}
We explore the Declination, a new metric intended to detect partisan gerrymandering.  We consider instances in which each district has equal turnout, the maximum turnout to minimum turnout is bounded, and turnout is unrestricted.  For each of these cases, we show exactly which vote-share, seat-share pairs $(V,S)$ have an election outcome with Declination equal to 0.  We also show how our analyses can be applied to finding vote-share, seat-share pairs that are possible for nonzero Declination.  

Within our analyses, we show that Declination cannot detect all forms of packing and cracking, and we compare the Declination to the Efficiency Gap.  We show that these two metrics can behave quite differently, and give explicit examples of that occurring.  
\end{abstract}

\section{Introduction }

Irregularly shaped districts have long been a hallmark of
gerrymandering in American democracy. Unfortunately, determining which
shapes are justifiable and appropriate is surprisingly
nuanced (see, for example, \cite{DuchinTenner}). Furthermore, the fine-grained demographic data available to
modern mapmakers enables them to achieve partisan aims even when
constrained to reasonable shapes. There is still a need for robust
tools capable of measuring how fair or unfair a redistricting plan is
in terms of its partisan effects.

Several new metrics of partisan gerrymandering have been developed in recent years.   Perhaps most famous is the ``Efficiency Gap'' (EG) defined by Stephanopoulos and McGhee \cite{PartisanGerrymanderingEfficiencyGap},  which played a prominent role in the Supreme Court case Gill v. Whitford  \cite{GillWhitford}, concerning redistricting in the Wisconsin legislature.  As states or courts impose specific mathematical requirements on their legislative districting plans it is essential that measures of partisan fairness are well understood, as there are subtitles that are not apparent to the casual observer.  In the 2018 election Missouri approved an amendment to the state constitution requiring legislative district maps to be drawn by a nonpartisan state demographer, and then approved by a bipartisan commission \cite{MoSos}.  Among the requirements for the map makers are to make the Efficiency Gap as close to zero as possible, while also promoting competitiveness.  In Table~\ref{EG0vD0} we show that in elections where all districts are competitive the Efficiency Gap can quite volatile, and will only be close to zero when seat margin close to two times vote margin.

  More details are included below and in \cite{2018arXiv180105301V}, but the intuitive idea underlying the EG is that gerrymandering involves ``packing'' and ``cracking'' of voters.  That is, a mapmaker can pack opponents into a small number of districts which are won with an overwhelming majority, and then crack the remaining opponents among many districts, which the opponents subsequently lose.  The EG is intended to detect this packing and cracking by comparing the ``wasted'' votes for each party in an election. In this formulation, Party A's wasted votes consist of any votes above the 50\% threshold in a district won by Party A (packed votes) and all votes for Party A in a district lost (cracked votes).  There are various other metrics intended to detect gerrymandering which do not rely on district shape, such as the mean-median difference, but we focus in this paper on the analysis of one such newly defined metric: the Declination.  

Warrington recently defined the Declination \cite{WarringtonDeclinationELJ}, which is based upon the number of seats won by Party A, the average vote share for Party A in districts it won, and the average vote share for Party A in districts it lost.   The idea is that if party $A$ has been packed, its average vote share in districts it won should be unusually high, and if party $A$ has been cracked, its average vote share in districts it lost should be close to (but of course below) 50\%.  Although fairly new, the Declination has gained attention, including a recent Stanford Law Review article by Stephanopoulos and McGhee \cite{MeasureMetricDebate} who describe the Declination as a ``promising statistic.''  We note that the idea of ordering party $A$'s vote share district by district (as described below) is also a central idea of the Gerrymandering Index from \cite{DukePaper}, although the Gerrymandering Index relies on statistical sampling techniques.  

To define the Declination, consider an election with $n$ districts, where there are two political parties: $A$ and $B$.  Suppose that party $A$ loses $k$ districts and wins $k'=n-k$ districts.  We order party $A$'s vote share in each of the $n$ districts in increasing order:
\begin{equation*}
p_1 \leq p_2 \leq \cdots \leq p_k < \frac{1}{2} < p_{k+1} \leq \cdots \leq p_n
\end{equation*}
We will also assume that $k \geq 1$ and $k'=n-k \geq 1$, because otherwise the Declination is not defined.  

We now place points in the $xy$-plane whose $y$-coordinates are the $p_i$s, and whose $x$-coordinates are the points $\frac{1}{2n}+(i-1)\frac{1}{n}$, for $i=1, 2, 3, \dots, n$.  More specifically, we have points
\begin{align*}
\mathcal{A} &= \left\{\left(\frac{1}{2n}+(i-1)\frac{1}{n}, p_i,\right): i = 1, 2, \dots, k\right\} \\
\mathcal{B} &= \left\{\left(\frac{1}{2n}+(j-1)\frac{1}{n}, p_j\right): j = k+1, k+2, \dots, n\right\}
\end{align*}
Note that $\mathcal{A}$ corresponds to points with $y$-coordinates less than $\frac{1}{2}$ (and thus to districts that party $A$ lost) and $\mathcal{B}$ corresponds to points with $y$-coordinates greater than $\frac{1}{2}$ (and thus to districts that party $A$ won).  

We let $F$ be the center of mass of the points in $\mathcal{A}$, $H$ be the center of mass of the points in $\mathcal{B}$, and $G$ be the point $\left(\frac{k}{n},\frac{1}{2}\right)$.  The Declination $\delta$, is the difference between the angle $\overline{GH}$ makes with the horizontal and the angle $\overline{FG}$ makes with the horizontal, scaled so that it is between -1 and 1.  More specifically, say that
\begin{align*}
F &= \left(\frac{k}{2n}, \overline{y}\right) \\
H &= \left(\frac{k}{n}+\frac{k'}{2n}, \overline{z}\right)
\end{align*}
Then we have
\begin{align*}
\theta_A &= \arctan\left(\frac{\overline{z}-\frac{1}{2}}{\frac{k'}{2n}}\right) = \arctan\left(\frac{2\overline{z}-1}{\frac{k'}{n}}\right) \\
\theta_B &= \arctan\left(\frac{\frac{1}{2}- \overline{y}}{\frac{k}{2n}}\right) = \arctan\left(\frac{1-2\overline{y}}{\frac{k}{n}}\right) \\
\delta &= \frac{2}{\pi} \left(\theta_A-\theta_B\right)
\end{align*}
Note that the scaling by $\frac{2}{\pi}$ is to ensure that $-1 \leq \delta \leq 1$.  For a visualization of $\delta$, see Figure~\ref{DeclinationVisual}. Note that since we are assuming a two-party election, Figure~\ref{DeclinationVisual} will be symmetric for Party B, and the Declination will have the same magnitude.  

\begin{figure}
\centerline{\includegraphics[width=4.5in]{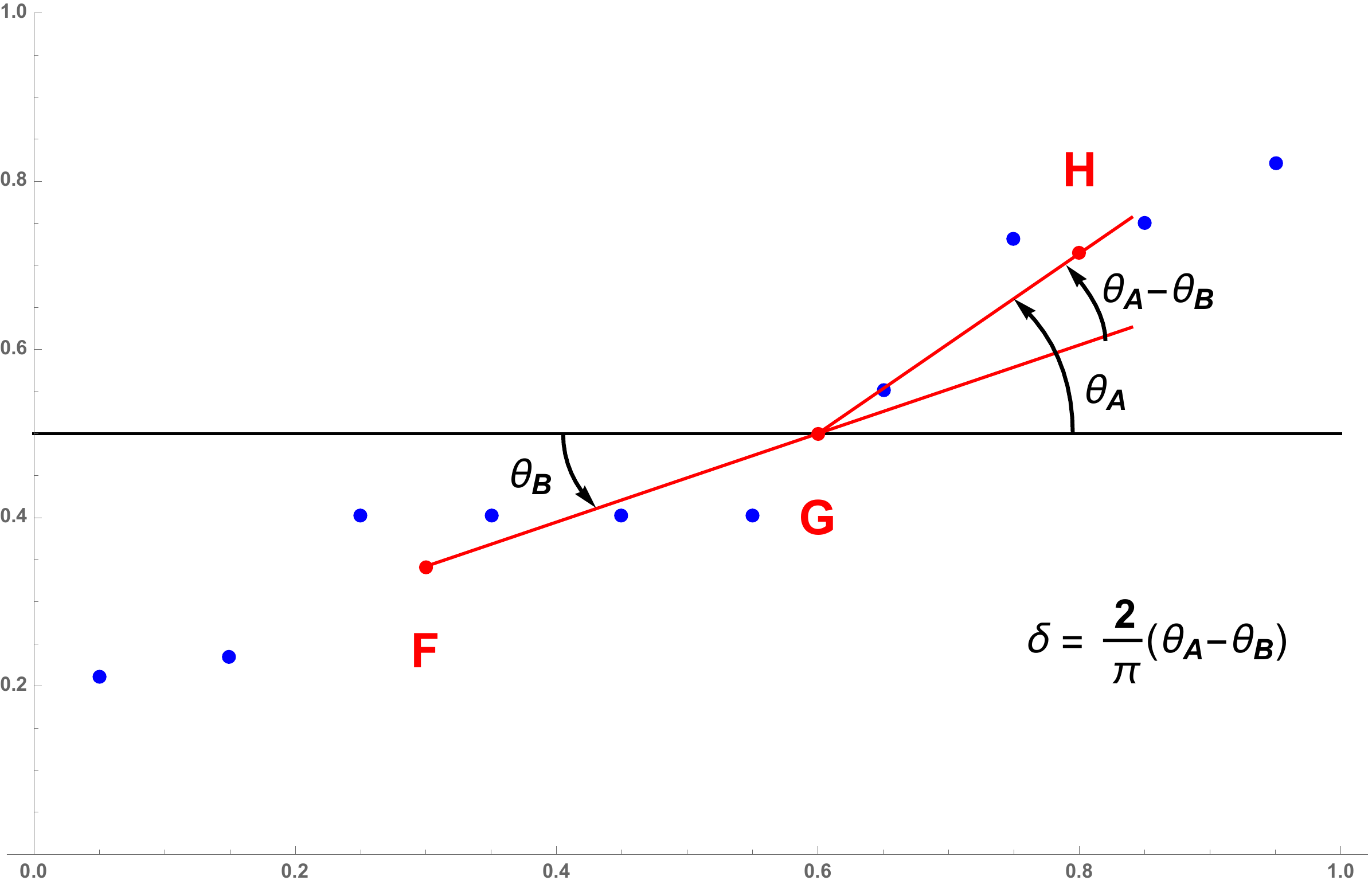}}
\caption{Visualization of Declination}\label{DeclinationVisual}
\end{figure}

For example, consider a state with 8 districts and the election result shown in Table~\ref{ExampleElection1} below, and its corresponding visualization in Figure~\ref{ExampleElection1Figure}.  In this case, we have $k=6$, $n=8$, $\overline{y}=0.346$, and $\overline{z}=0.551$ giving $F=( 0.375, 0.346)$, $G=(0.75, 0.5)$, and $H=(0.875, 0.551)$. Then $\delta$ is essentially equal to 0 because the points $F$, $G$, and $H$ are nearly collinear. Warrington was careful in \cite{WarringtonDeclinationELJ} to not give any particular range of optimal values for the Declination.  However, the design of the metric (as well as Warrington's discussion in section 3.2 of \cite{WarringtonDeclinationELJ}) suggests that a Declination near 0 is indicative of a fair districting map.    

\begin{table}[h]
\begin{center}
\begin{tabular}{|c|c|c|c|c|c|c|c||c|}\hline
Dist.1 & Dist. 2 & Dist. 3 & Dist. 4 & Dist. 5 & Dist. 6 & Dist. 7 & Dist. 8 & $\delta$ \\ \hline\hline
40.2\% & 40.2\% & 40.2\% & 40.2\% &  23.4\% & 55.1\% & 23.4\% & 55.1\% & $\approx 0$ \\ \hline
\end{tabular} \\
\end{center}
\caption{Example Election}\label{ExampleElection1}
\end{table}

\begin{figure}
\begin{center}
\includegraphics[width=4.5in]{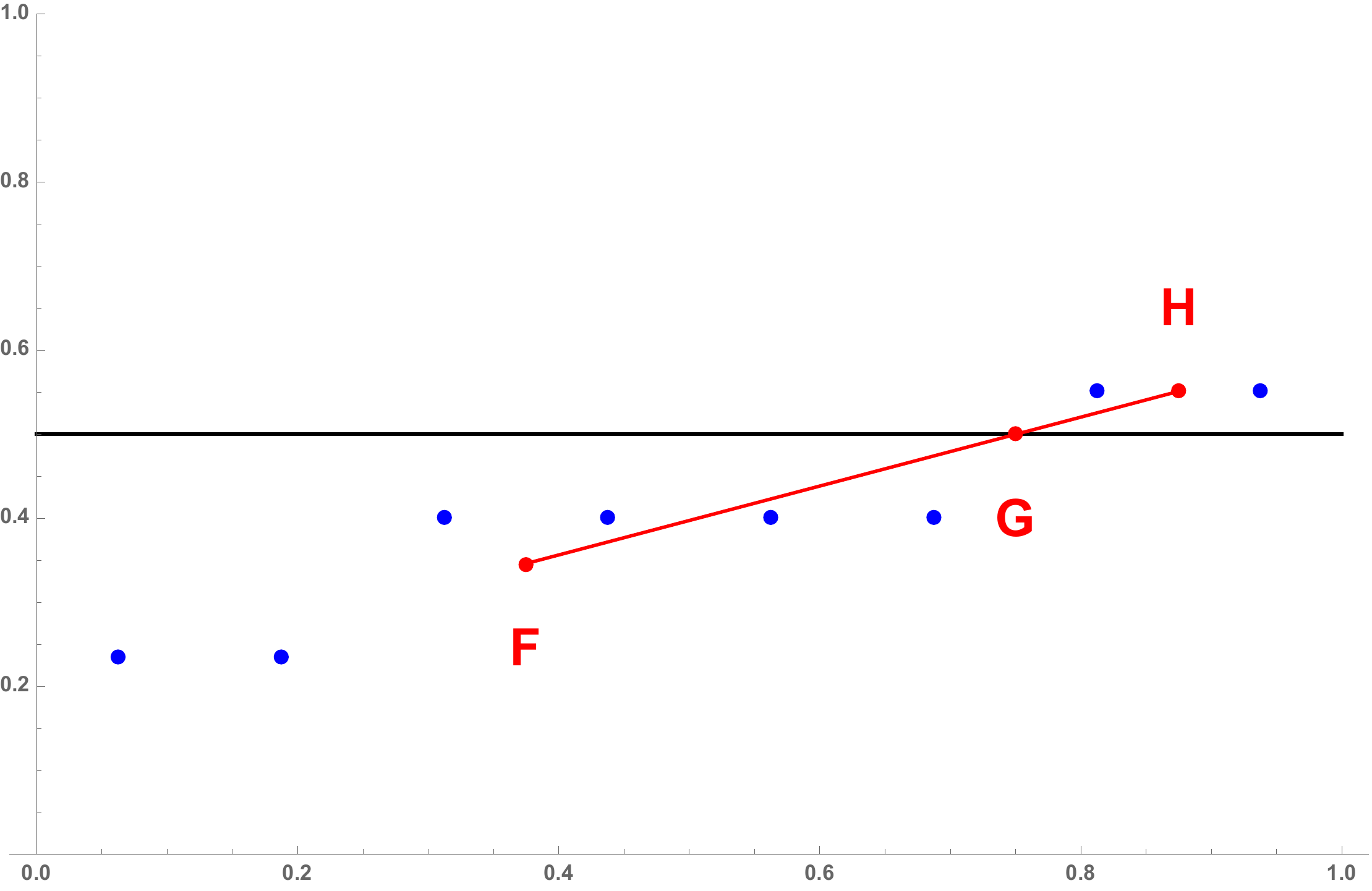}
\end{center}
\caption{Visualization of $\delta$ corresponding to Example Election from Table~\ref{ExampleElection1}}\label{ExampleElection1Figure}
\end{figure}

One interpretation of $\delta=0$ is that it implies a certain proportionality as shown in Figure~\ref{AverageSDVisual}. Specifically, 
if we think of $\overline{z}-\frac{1}{2}$ as the average surplus in districts won by Party A and $\frac{1}{2}-\overline{y}$ as the average deficit in districts lost by Party A, then $\delta=0$ implies the ratio of average surplus to average deficit is the same as the ratio between the number of seats won to the number of seats lost. That is, $\delta=0$ implies 
\[ \frac{\overline{z}-1/2} {1/2-\overline{y}} = \frac{k'}{k} \]

\begin{figure}
\centerline{\includegraphics[width=4.5in]{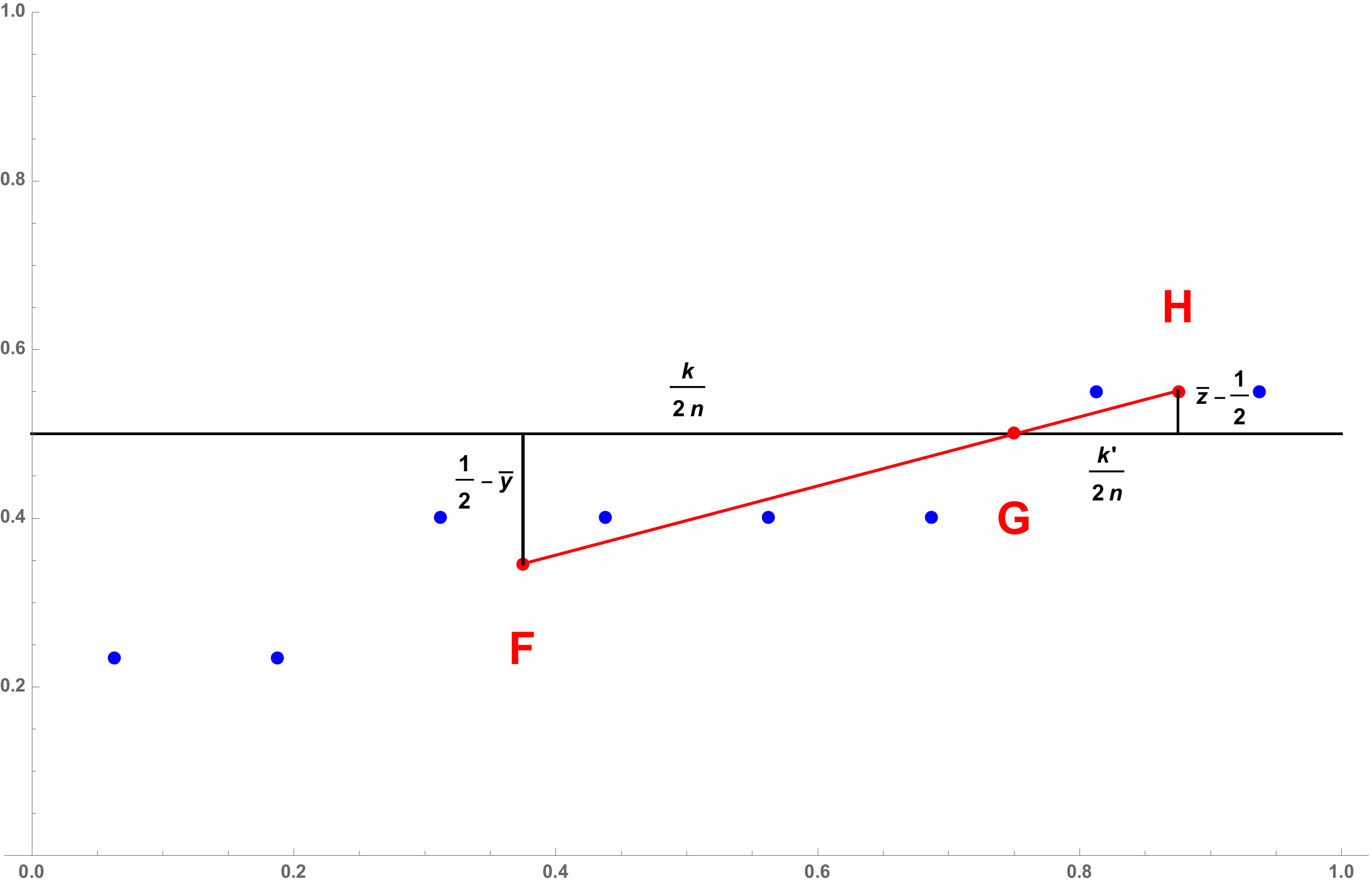}}
\caption{Visualization of $\overline{z}-\frac{1}{2}$ and $\frac{1}{2}-\overline{y}$.}\label{AverageSDVisual}
\end{figure}

Before continuing with our analysis of the Declination, we note that the Declination suffers from a limitation in common with many other metrics, including the Efficiency Gap.  While an attractive feature of these measures  is that they are easy to compute solely from election data and may therefore be more palatable to courts, they are \emph{not} affected by the distribution of voters within the districts.  For example, see Figure~\ref{TwoPictures}.

\begin{figure}[h]
\centering
\subfigure[Visually, the districting seems to crack blue voters here.]{\label{CrackedExample}\includegraphics[height=1.2in]{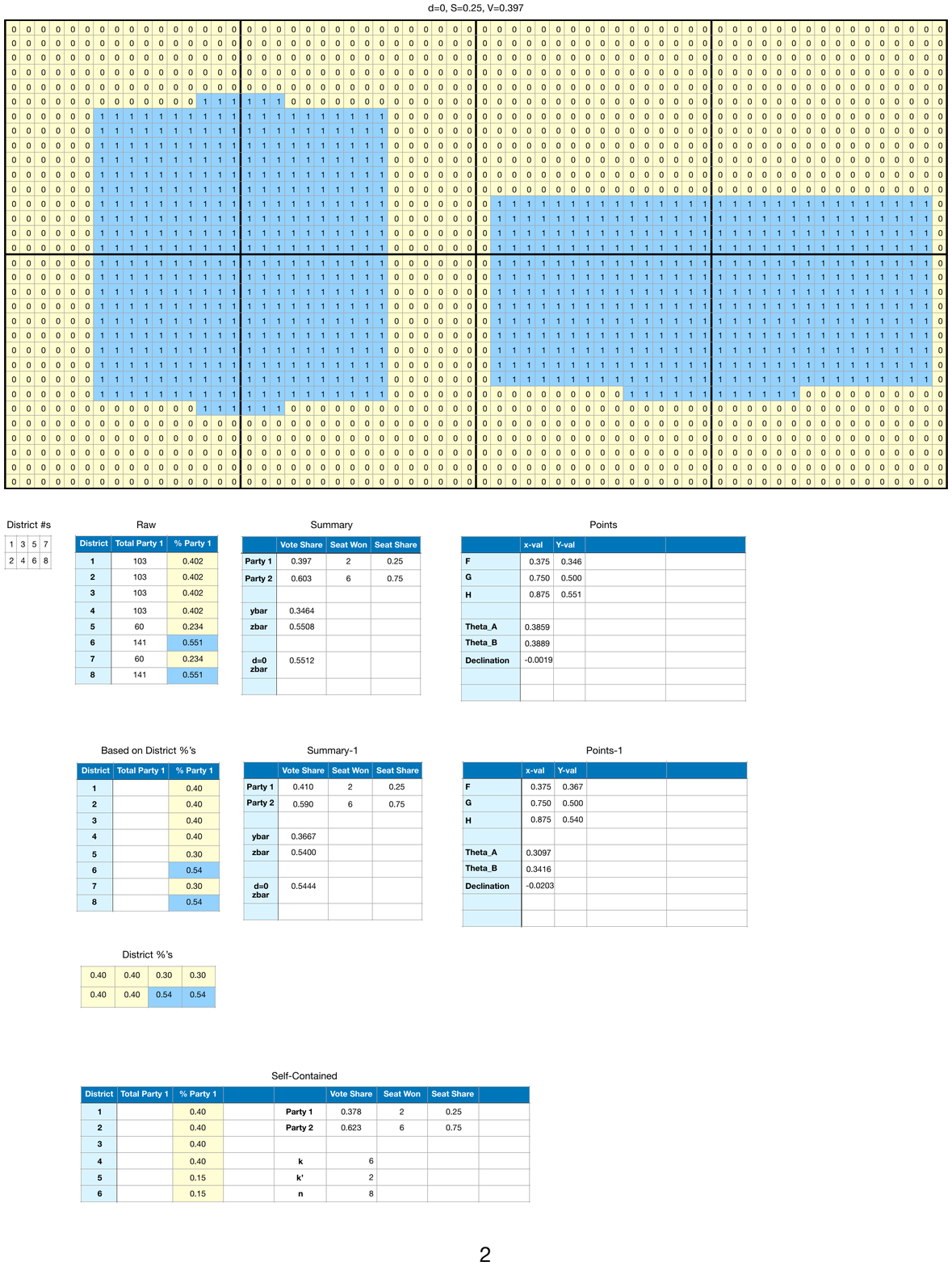}}
\hspace{.1 in} 
\subfigure[The districting visually looks more natural here.]{\label{NotCrackedExample}
\includegraphics[height=1.2in]{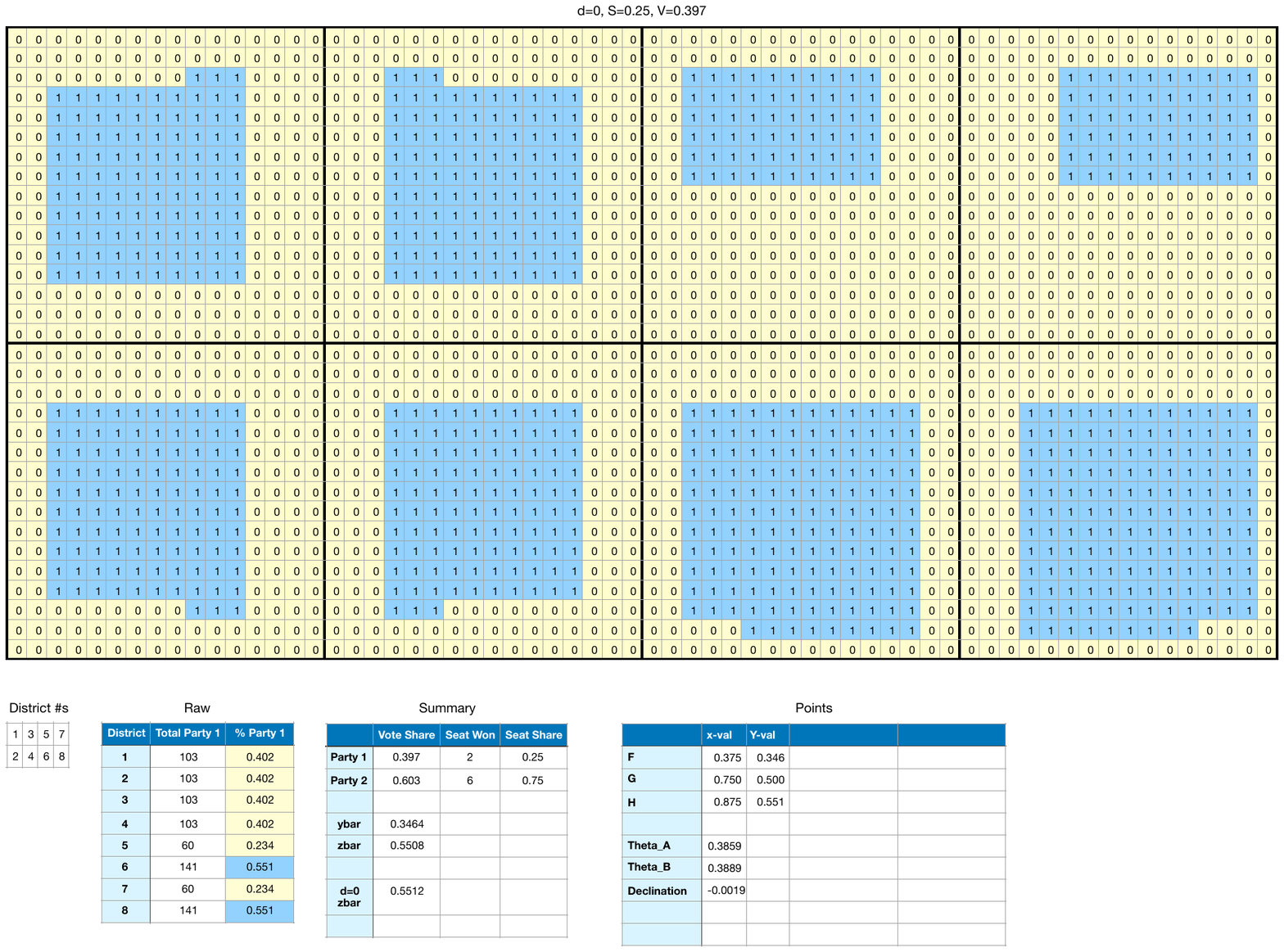}}
\caption{Metrics like the Efficiency Gap and the Declination do not take into account \emph{where} the voters are within each district.}
\label{TwoPictures}
\end{figure}

 In Figure~\ref{CrackedExample}, it appears that the blue party's population is strategically divided to minimize its election outcomes, whereas Figure~\ref{NotCrackedExample} seems like a more natural drawing of boundaries. However, both of these give rise to the election outcomes in Table~\ref{ExampleElection1} and therefore have Declination 0.  Indeed, any metric relying solely on election outcomes will not be able to distinguish between voter distributions such as these.  Thus, any such metric will be unable to distinguish between two (districting, distribution) pairs with the same outcome, only one of which was drawn with obvious partisan intent.

 The paper is organized as follows.   In Section \ref{ComparisonSection} we give some comparisons between the Declination and the Efficiency Gap, showing that these two metrics are quite different.  In Section \ref{DetectingSection}, we show that, while the Declination does detect packing and cracking as defined in Warrington's paper \cite{WarringtonDeclinationELJ}, those definitions are quite restrictive.  Broadening the definitions slightly allows us to provide examples of packing and cracking that the Declination does not detect.  In Section \ref{SectionDeclination0} we explore the implications of $\delta = 0$, and in Section \ref{SectionVSTDeclination0}  we show how turnout affects vote-share, seat-share pairs $(V,S)$ which have a corresponding election outcome whose Declination is 0.  (The discussion of elections where all districts have equal turnout is in Section \ref{C1Section}, while uneven turnout is discussed in Section \ref{CMore1Section}).  In Section \ref{DNot0Section} we show how our analyses can be translated to the case of $\delta \not=0$.  And finally, we draw some conclusions in Section \ref{ConclusionSection}.

\section{Some Comparisons of Declination and Efficiency Gap}\label{ComparisonSection}

The Efficiency Gap is based on the idea of a ``wasted vote.''  In a district that party $A$ won, any votes for party $A$ above half the total votes in that district is considered wasted.  In a district that party $A$ lost, every vote for party $A$ is considered wasted.  The same description applies to party $B$'s wasted votes.  The Efficiency Gap is then defined as:
\begin{equation}\label{OriginalEG}
EG = \frac{\text{number of party } B \text{'s wasted votes} - \text{number of party } A \text{'s wasted votes}}{\text{total votes in the state}}
\end{equation}
The intuition is that $EG=0$ reflects a non-partisan districting plan where each party has the same number of wasted votes. 
\emph{If} turnout in all districts is equal (or, more generally, if the average turnout in districts that party $A$ won is equal to the average turnout in districts party $A$ lost \cite{2018arXiv180105301V}), the Efficiency Gap simplifies to the following:
\begin{equation}\label{RevisedEG}
EG = \left(S-\frac{1}{2}\right) - 2\left(V - \frac{1}{2}\right)
\end{equation}
where $S$ is the seat share for party $A$ and $V$ is the statewide vote share for party $A$.\footnote{McGhee has expressed a desire for the EG to satisfy the Efficiency Principle.  Veomett showed that the definition of EG from equation \eqref{OriginalEG} does \emph{not} satisfy the Efficiency Principle \cite{2018arXiv180105301V}, and thus McGhee has subsequently suggested that equation \eqref{RevisedEG} become the definition of the EG, even in cases where turnout is uneven among districts \cite{MeasuringEfficiency}.}  

The Efficiency Gap hinges on the idea of packing and cracking:  any packing is accounted for in party $A$'s wasted votes in districts it won, while cracking is accounted for in party $A$'s wasted votes in districts it lost.  The Declination is similar in that it also intends to detect packing and cracking.  If party $A$ is both packed and cracked, $\overline{y}$ will be close to 50\% and $\overline{z}$ will be very large, resulting in a large Declination.  Indeed, Warrington made a point of proving that the Declination detects certain definitions of packing and cracking \cite{WarringtonDeclinationELJ}.  But, while both the Efficiency Gap and the Declination focus on packing and cracking, their definitions get at packing and cracking through very different means.  This does not necessarily imply that the Efficiency Gap and the Declination \emph{act} differently, but we show here that, in fact, they do.  In our analysis we will consider the cases when the ratio $C$ of the largest turnout in any district to the smallest turnout in any district is 1, when the turnout ratio is no more than 4, and when the turnout ratio is unrestricted.\footnote{We chose to give a picture of when the ratio of max turnout in a district to min turnout in a district is no more than 4 because this seems to be approximately the largest such turnout ratio seen in recent elections (see Table~\ref{TurnoutTable} below).}

If we restrict ourselves to considering elections with equal turnout in all districts, any election with $EG=0$ and fixed vote share $V$ will have seat share $S=2V-\frac{1}{2}$.  In other words, in this case the Efficiency Gap preferred seat share depends solely on the statewide vote share of party $A$, and so every $(V,S)$ pair with $EG=0$ lies on the line $S=2V-\frac{1}{2}$, see the dark grey line in Figure~\ref{PossibleV_and_S_EG4TripleOverlay}.    
\begin{figure}[h]
\centering
\includegraphics[width=3.5in]{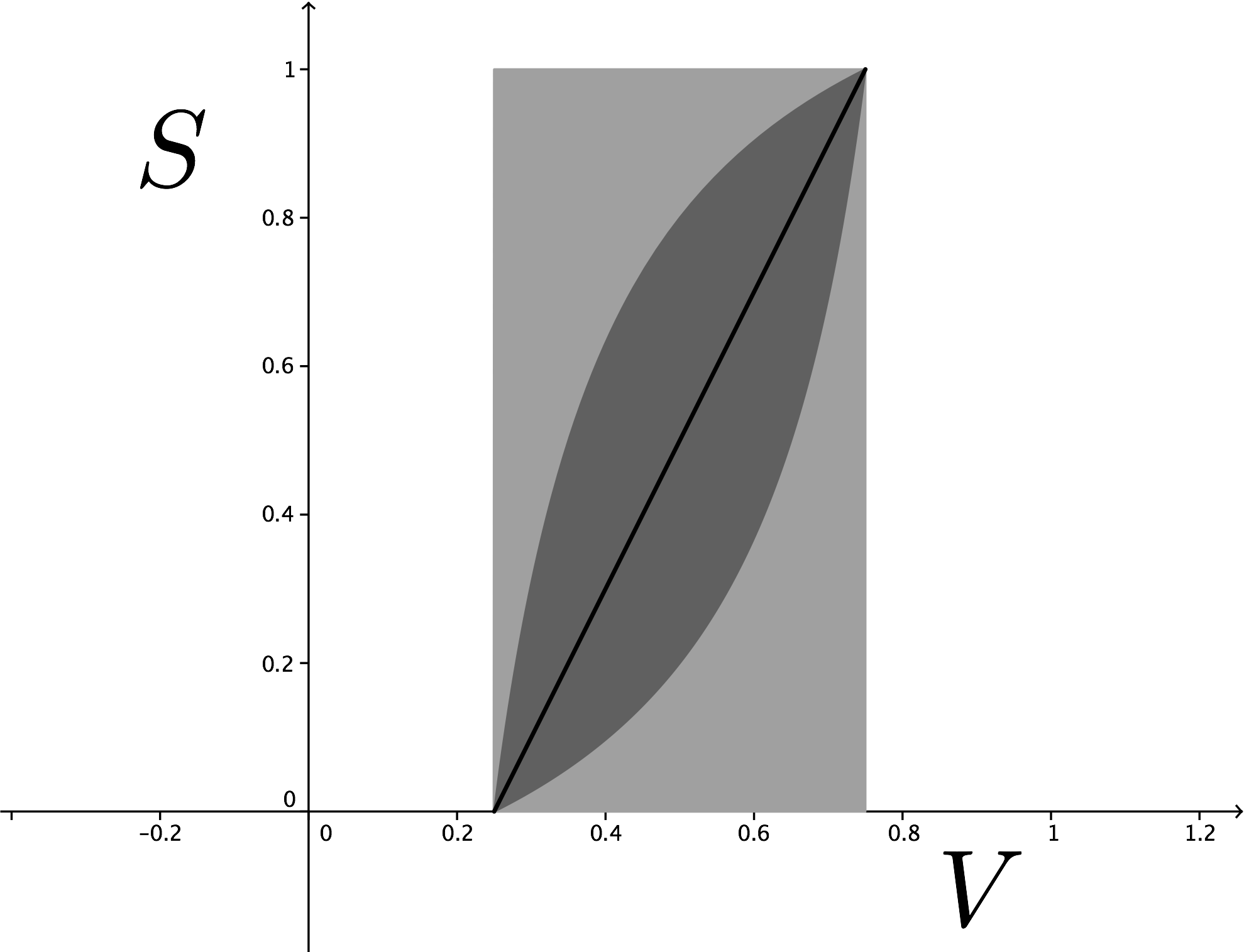}

\caption{Overlay of pairs of seat shares $S$ and vote shares $V$ that can have Efficiency Gap 0 when $C = 1$ (darkest gray), $C\le4$ (darkest and middle gray), and $C$ unrestricted (darkest, middle, and lightest gray). 
}
\label{PossibleV_and_S_EG4TripleOverlay}
\end{figure}

On the other hand, given an election with equal voter turnout and $\delta=0$, the seat share $S$ of party $A$ is a function of both the average vote share $\overline{y}$ in districts lost by party $A$, and the average vote share $\overline{z}$ in districts won by party $A$.  Thus the set of $(V,S)$ pairs with $\delta=0$ is a 2-dimensional region in the $VS$-plane, see the dark grey region in Figure  \ref{PossibleV_and_S_d4TripleOverlay}.  In fact, this region covers at least $\frac{1}{4}$ of the $VS$-plane.  Indeed, note that for \emph{any} seat share with $S \not=0$, $S \not=1$ it is possible to have an election outcome\footnote{Here the restrictions of $S \not=0$ and $S \not=1$ are so that the Declination is \emph{defined}.} with $V = \frac{1}{2}$ and $\delta = 0$.  

\begin{figure}[h]
\centering
\includegraphics[width=3.5in]{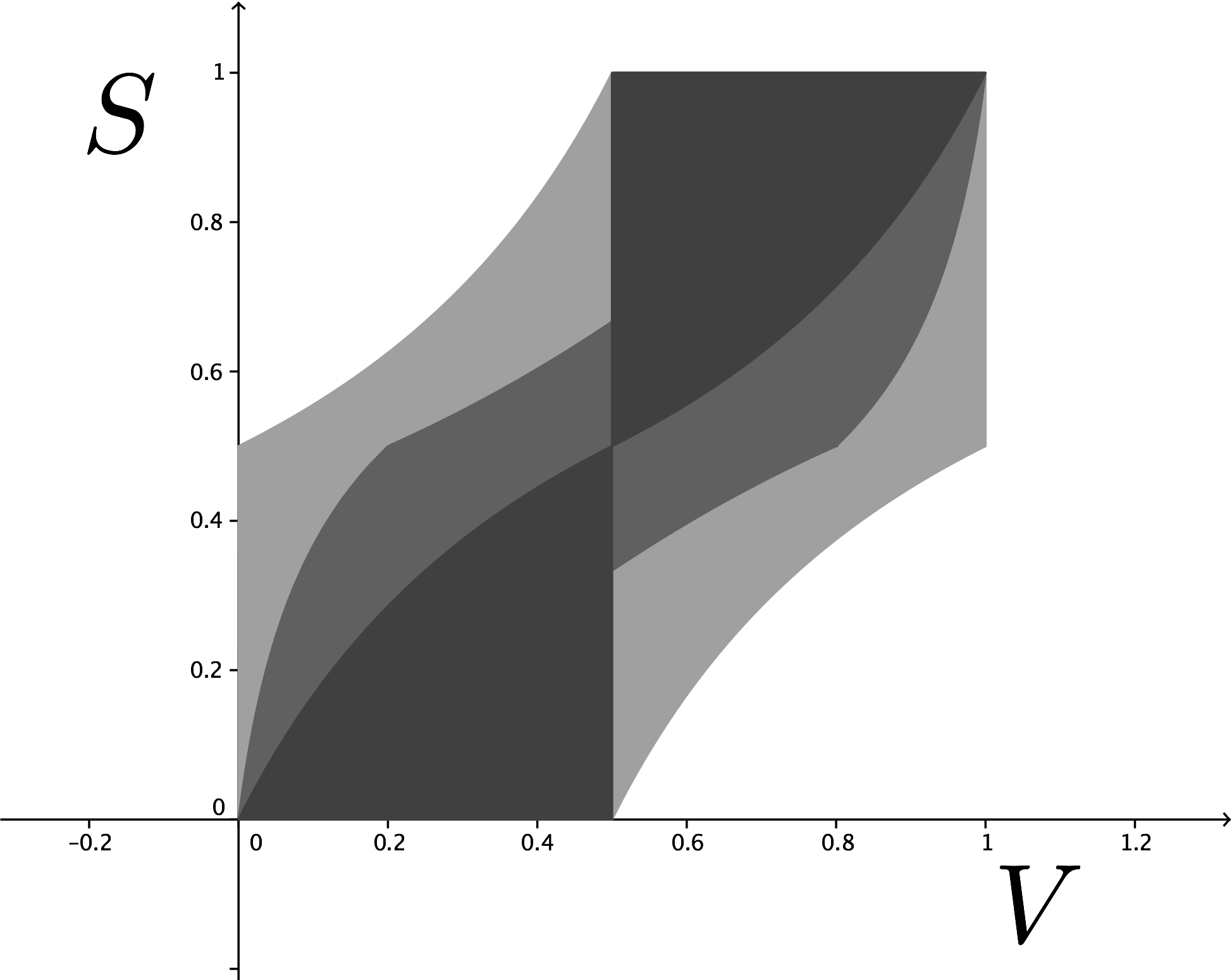}

\caption{Overlay of pairs of seat shares $S$ and vote shares $V$ that can have $\delta = 0$ when $C = 1$ (darkest gray), $C\le4$ (darkest and middle gray), and $C$ unrestricted (darkest, middle, and lightest gray). 
}

\label{PossibleV_and_S_d4TripleOverlay}
\end{figure}

Figures~\ref{PossibleV_and_S_EG4TripleOverlay} and \ref{PossibleV_and_S_d4TripleOverlay} also show vote-share, seat-share pairs $(V,S)$ which give $EG=0$ or $\delta=0$ when when the largest turnout in any district is no more than 4 times the smallest turnout in any district, and when the turnout ratio is unrestricted.   In each of these cases (turnout equal, turnout ratio no more than 4, unrestricted turnout), the area covered by vote-share, seat-share pairs giving $EG=0$ is smaller than the area covered by vote-share, seat-share pairs giving $\delta=0$.  An overlay of Figures \ref{PossibleV_and_S_EG4TripleOverlay} and \ref{PossibleV_and_S_d4TripleOverlay} (when turnout ratio is no more than 4) is given in Figure~\ref{EGandDOverlay}, for more ease of comparison.   The specifics of these figures are discussed in more detail in Sections \ref{C1Section} and \ref{CMore1Section}.

\begin{figure}[h]
\centering
\includegraphics[width=3.5in]{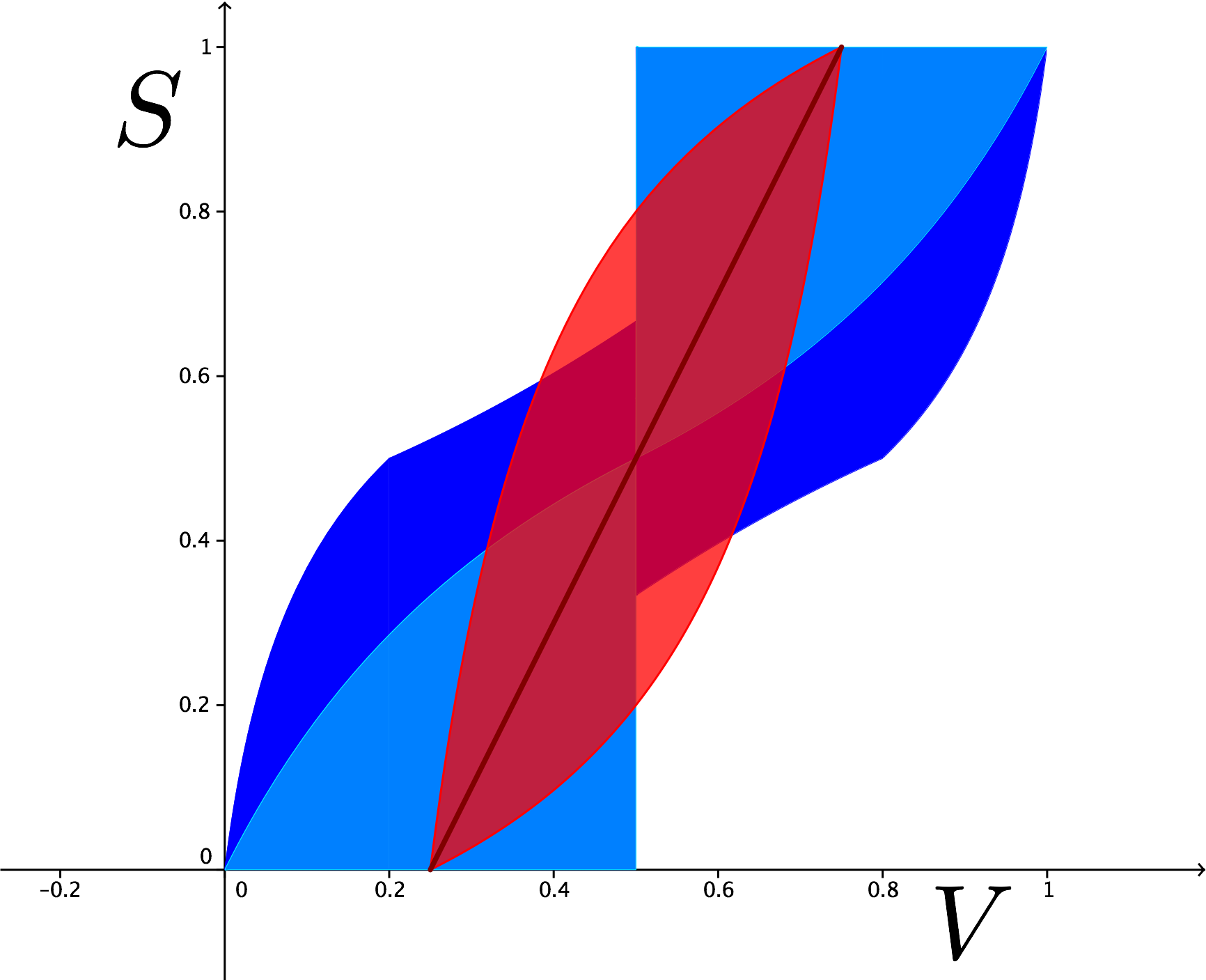}
\caption{Overlay of pairs of seat shares $S$ and vote shares $V$ that can have Efficiency Gap 0 when $C = 1$ (maroon) and $C=4$ (red), versus $\delta=0$ when $C=1$ (light blue) and $C=4$ (dark blue).}
\label{EGandDOverlay}
\end{figure}

These differences are further highlighted in the contrasting election outcomes from Tables  \ref{tabletEG=0} and \ref{EG0vD0}.  For all elections in both tables, turnout is assumed to be the same in each district.  In each election from Table  \ref{tabletEG=0}, every district lost by party $A$ has vote share $25\%$ and every district won by party $A$ has vote share $75\%$ so that the Efficiency Gap is $0$.  The Declination varies greatly among these elections, and is only equal to zero when party $A$ wins $50\%$ of the seats, i.e. the Declination preferred seat share is $5$.

\begin{table}[h]
\begin{tabular}{|c||c|c|c|c|c|c|c|}
\hline
Election Number & 1 & 2 & 3 & 4 & 5 & 6 & 7   \\ \hline \hline
Dist. 1 & 25\% & 25\% & 25\% & 25\% & 25\% & 25 \%& 25\%   \\ \hline
Dist. 2 & 75\% & 25\% & 25\% & 25\% & 25\% & 25\% & 25\%   \\ \hline
Dist. 3 & 75\% & 75\% & 25\% & 25\% & 25\% & 25\% & 25\%   \\ \hline
Dist. 4 & 75\% & 75\% & 75\% & 25\% & 25\% & 25\% & 25\%   \\ \hline
Dist. 5 & 75\% & 75\% & 75\% & 75\% & 25\% & 25\% & 25\%   \\ \hline
Dist. 6 & 75\% & 75\% & 75\% & 75\% & 75\% & 25\% & 25\%   \\ \hline
Dist. 7 & 75\% & 75\% & 75\% & 75\% & 75\% & 75\% & 25\%   \\ \hline
Dist. 8 & 75\% & 75\% & 75\% & 75\% & 75\% & 75\% & 75\%   \\ \hline \hline
EG & 0 & 0 & 0 & 0 & 0 & 0 & 0   \\ \hline 
$\delta$         & -0.514   &  -0.33  & -0.161   & 0     &  0.161  &  0.33  &  0.514    \\ \hline

\end{tabular}
\caption{Elections with EG=0 and various Declinations}
\label{tabletEG=0}
\end{table}

In Table~\ref{EG0vD0}, by contrast, both $\overline{y}$ and $\overline{z}$ are chosen to be close to 50\%.  Thus, given the volatility of the EG, the value of the EG varies widely across all of these elections, whose Declinations are all essentially 0.  Please note that we give the value of $EG \times 2$ in Table~\ref{EG0vD0} for ease in comparison since $-1 \leq EG \times 2 \leq 1$ and $-1 \leq \delta \leq 1$.

\begin{table}[h]
\begin{tabular}{|c||c|c|c|c|c|c|c|}
\hline
Election Number & 1 & 2 & 3 & 4 & 5 & 6 & 7   \\ \hline \hline
Dist. 1 & 49\% & 49\% &49\% & 49\% & 48\% & 47 \%& 43\%   \\ \hline
Dist. 2 & 57\% & 49\% & 49\% & 49\% & 48\% & 47\% & 43\%   \\ \hline
Dist. 3 & 57\% & 53\% & 49\% & 49\% & 48\% & 47\% & 43\%   \\ \hline
Dist. 4 & 57\% & 53\% & 51\% & 49\% & 48.5\% & 47\% & 43\%   \\ \hline
Dist. 5 & 57\% & 53\% & 51.5\% & 51\% & 49\% & 47\% & 43\%   \\ \hline
Dist. 6 & 57\% & 53\% & 52\% & 51\% & 51\% & 47\% & 43\%   \\ \hline
Dist. 7 & 57\% & 53\% & 52\% & 51\% & 51\% & 51\% & 43\%   \\ \hline
Dist. 8 & 57\% & 53\% & 52\% & 51\% & 51\% & 51\% & 51\%   \\ \hline \hline
$EG\times 2$ & 0.51& 0.42 & 0.2225 & 0 & -0.2225 & -0.42 & -0.51   \\ \hline 
$\delta$         &0 & 0 & 0 & 0 & 0 & 0 & 0     \\ \hline
\end{tabular}
\caption{Elections Highlighting Differences Between Declination and Efficiency Gap}\label{EG0vD0}
\end{table}

Some of the critiques of the Efficiency Gap are not an issue for the Declination.  For example,  for $V>75\%$ or $V<25\%$, there \emph{does not exist} a seat share $S$ with $EG=0$, regardless of turnout.  But for any vote share $V$ there exists some seat share $S$ with a corresponding election outcome giving $\delta=0$.  (And similarly, for any seat share $S$ there exists some vote share $V$ with a corresponding election outcome giving $\delta=0$).  Additionally, some have critiqued the fact that EG does not favor direct proportionality ($S = V$), but we can easily see that even with turnout equal in all districts, given a vote share $V$ there exists an election outcome with $\delta = 0$ and $S = V$.  

The Declination is susceptible to volatility, which has been another common critique of the EG \cite{MR3699778}.  For example, consider the elections in Table~\ref{DeclinationVolatileTable}.  

\begin{table}[h]
\begin{center}
\begin{tabular}{|c|c|c|c|c|c|c|c|c||c|}\hline
Dist.1 & Dist. 2 & Dist. 3 & Dist. 4 & Dist. 5 & Dist. 6 & Dist. 7 & Dist. 8 &Dist. 9 & $\delta$ \\ \hline\hline
\multicolumn{10}{|c|}{Election 1} \\ \hline
51\% & 50.1\% & 59\% & 30\% &  32\% & 35\% & 81\% & 35\% & 51\% &  $\approx -0.228$ \\ \hline \hline
\multicolumn{10}{|c|}{Election 2} \\ \hline
49\% & 49\% & 59\% & 30\% &  32\% & 35\% & 81\% & 35\% & 49\% &  $\approx 0.515$ \\ \hline \hline
\end{tabular} \\
\end{center}
\caption{Two sample elections, revealing the Declination's volatility}\label{DeclinationVolatileTable}
\end{table}
Districts 1, 2, and 9 were the only close races, and those districts flipping from one party to the other changes the Declination substantially, from $\delta\approx -0.228$ to $\delta\approx 0.515$.  One might ask whether this could possibly happen in practice, to which we answer a definitive yes.  Indeed, the data for Election 1 in Table~\ref{DeclinationVolatileTable} is nearly precisely the outcome of the 2012 US Congressional election in Arizona, where the percentages are the Democratic party's vote share.  (We say nearly precisely because some districts had more than two parties receiving a substantial number of votes).  If Districts 1, 2, and 9 had flipped, Election 2 could have easily been an outcome.  Clearly the Declination is still subject to volatility, and in at least one state the kind of volatility that the Declination is susceptible to actually does occur. 

Though both these tools and others have  shortcomings and merits, they  have a significant shortcoming: because they rely on election data
only, they cannot themselves show that map-drawers actually drew a partisan map. That is, perhaps the election data looks suspicious, but maybe the electorate have self-sorted in a way that makes a lopsided seat share unavoidable.

Both the Efficiency Gap and the Declination rely on election data only, and they both set out to determine if that data is from a fair election.  In elections with $EG=0$ the possible vote-share, seat-share combinations are much more restricted than for elections with $\delta=0$.  The Efficiency Gap does not indicate that elections with proportional vote-share and seat share are fair, whereas under certain conditions the Declination does indicate such elections are fair.  The Declination is likely to indicate an election with \emph{all} competitive districts is fair, but to behave unpredictably if some districts are competitive and some are not.  The Efficiency Gap can indicate an election with all non-competitive districts is fair, as long as they are equally  non-competitive for both sides.   In order to make good use of either of these metrics it is important to reflect further on which of these election conditions is desirable. Indeed different states have expressed different motivations along these lines when making their maps.

\section{On Detecting ``Packing'' and ``Cracking'' }\label{DetectingSection}

Before proceeding, we first note that the Declination depends only on percentages, and thus is invariant under scaling.  Thus, any examples containing small numbers of voters can have turnout uniformly scaled across all districts to give examples with a realistic number of voters.

Warrington introduced the Declination in \cite{WarringtonDeclinationELJ}, and in this article, he defined ``cracking'' as follows:

\begin{quotation}
Define $A$-cracking to be the moving of party-$A$ votes from district $k + 1$ to districts $1, 2, \dots, k$ such that 
\begin{enumerate}
\item the first $k$ districts are still lost by party $A$ after the redistribution and 
\item district $k + 1$ becomes a district that party $A$ loses.
\end{enumerate}
\end{quotation}
He also defines ``packing'' as follows:
\begin{quotation}
Define $A$-packing to be the moving of party-$A$ votes from district $k + 1$ to districts $k + 2, k+3,\dots, n$, such that district $k+1$ is now lost by party $A$. 
\end{quotation}
He then shows that the Declination detects $A$-cracking and $A$-packing \emph{\textbf{if }}$p_{k+1}' > \overline{y}$ (where $p_{k+1}'$ is the new percentage of votes to party $A$ in that $k+1$th district after the cracking or packing occurs).  

First we note that, while ``packing and cracking'' is widely acknowledged to be the means by which gerrymandering occurs, there is no well-defined or broadly accepted definition of packing or cracking.  In particular, Warrington's definition is quite restrictive in that it only allows shifting votes from district $k+1$.  If his definition of cracking broadened slightly, then we can show that the Declination does not detect the new, slightly broader, definition of cracking.  More specifically, we consider the following:

\begin{Definition}[Modified $A$-cracking]
Define modified $A$-cracking to be the moving of party-$A$ votes from district $i \geq k+1$ to districts $1, 2, \dots, k$ such that 
\begin{enumerate}
\item the first $k$ districts are still lost by party $A$ after the redistribution and 
\item district $i$ becomes a district that party $A$ loses.
\end{enumerate}
\end{Definition}
We can keep the original definition of $A$-packing, but for both packing and cracking remove the restriction that $p_i'> \overline{y}$ (if $p_i$ is the new percentage of votes to party $A$ in district $i$, the new district that party $A$ loses).  With the following examples, we see that the Declination does \emph{not} detect these slightly broader (but still quite narrow) modified definitions of packing and cracking:

\begin{Example}\label{ModifiedCrackingExample}
Consider the elections in Table~\ref{ModifiedCracking}, whose declinations are visualized in Figure~\ref{NoCrack}.  Note that both have 8 districts.  We can witness ``modified $A$-cracking'' from district 8 into districts 1-6.  The ratio of maximum voter turnout in a single district in the state divided by minimum voter turnout in a single district in the state in both of these elections does not exceed 2.25
\begin{table}[h]
\centering
\begin{tabular}{|c||C{1cm}|C{1cm}|c||C{1cm}|C{1cm}|c|}\hline
& \multicolumn{3}{|c||}{Election 1} & \multicolumn{3}{c|}{Election 2} \\ \hline
District & \multicolumn{2}{c|}{Votes}  & Turnout & \multicolumn{2}{c|}{Votes} & Turnout  \\ \hline
 & A & B & & A & B  &  \\ \hline
1 & 4 & 6 & 10 & 4 & 6 & 10 \\ \hline
2 & 4 & 6 & 10 & 5 & 6 & 11 \\ \hline
3 & 4 & 6 & 10 & 5 & 6 & 11\\ \hline
4 & 4 & 6 & 10 & 5 & 6 & 11\\ \hline
5& 4 & 6 & 10 & 5 & 6 & 11 \\ \hline
6 & 4 & 6 & 10 & 5 & 6 & 11\\ \hline
7& 6 & 5 & 11 & 6 & 5 & 11 \\ \hline
8& 7 & 3 & 10 & 2 & 3 & 5 \\ \hline
$\delta$ & \multicolumn{3}{c||}{ 0.328255} & \multicolumn{3}{c|}{0.3120541} \\ \hline
\end{tabular}
\caption{Undetected Cracking from district 8 to districts 1-6}\label{ModifiedCracking}
\end{table}

\end{Example}

\begin{figure}[h]
\centering
\includegraphics[width=2.5in]{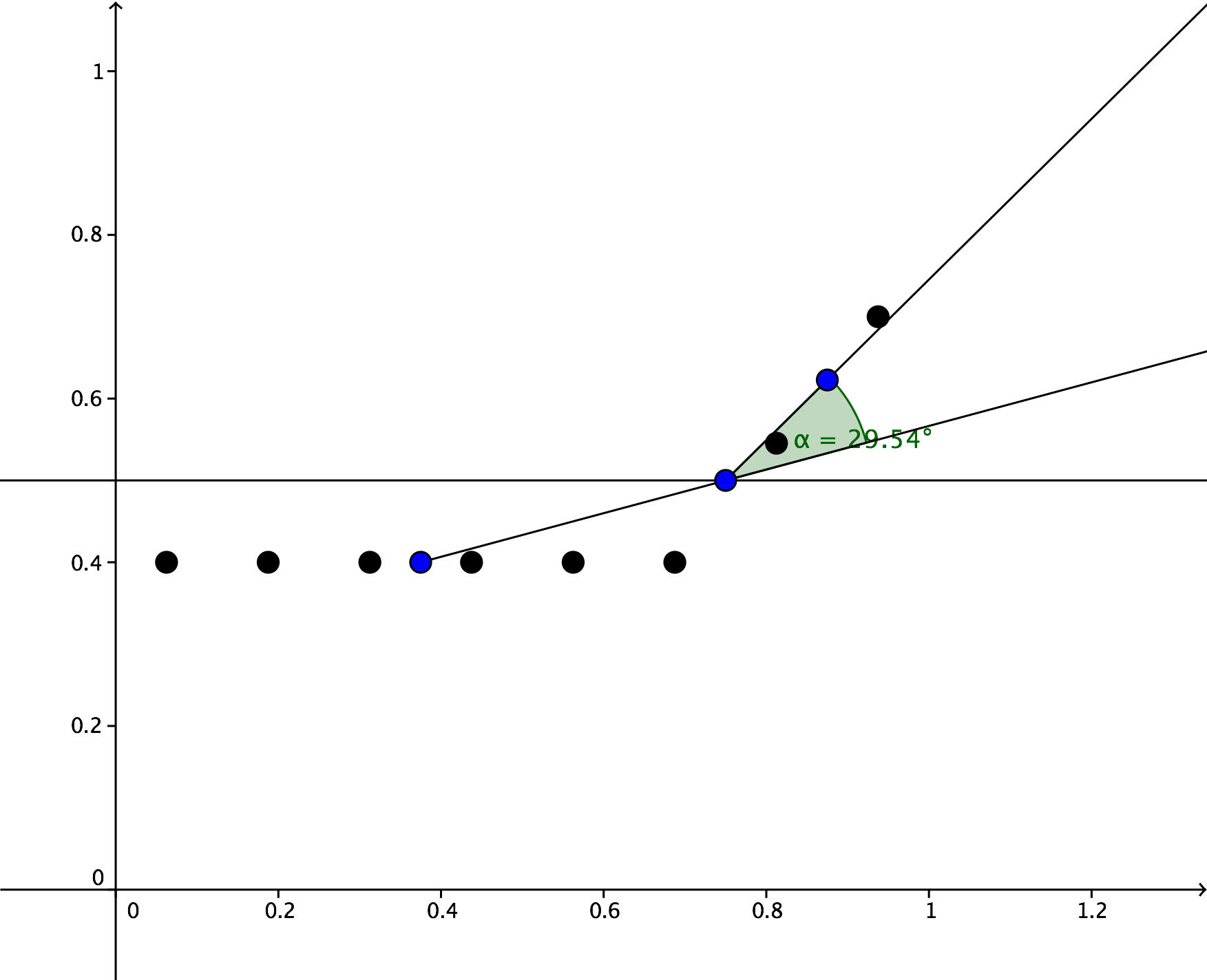}\hspace{.2in}\includegraphics[width=2.5in]{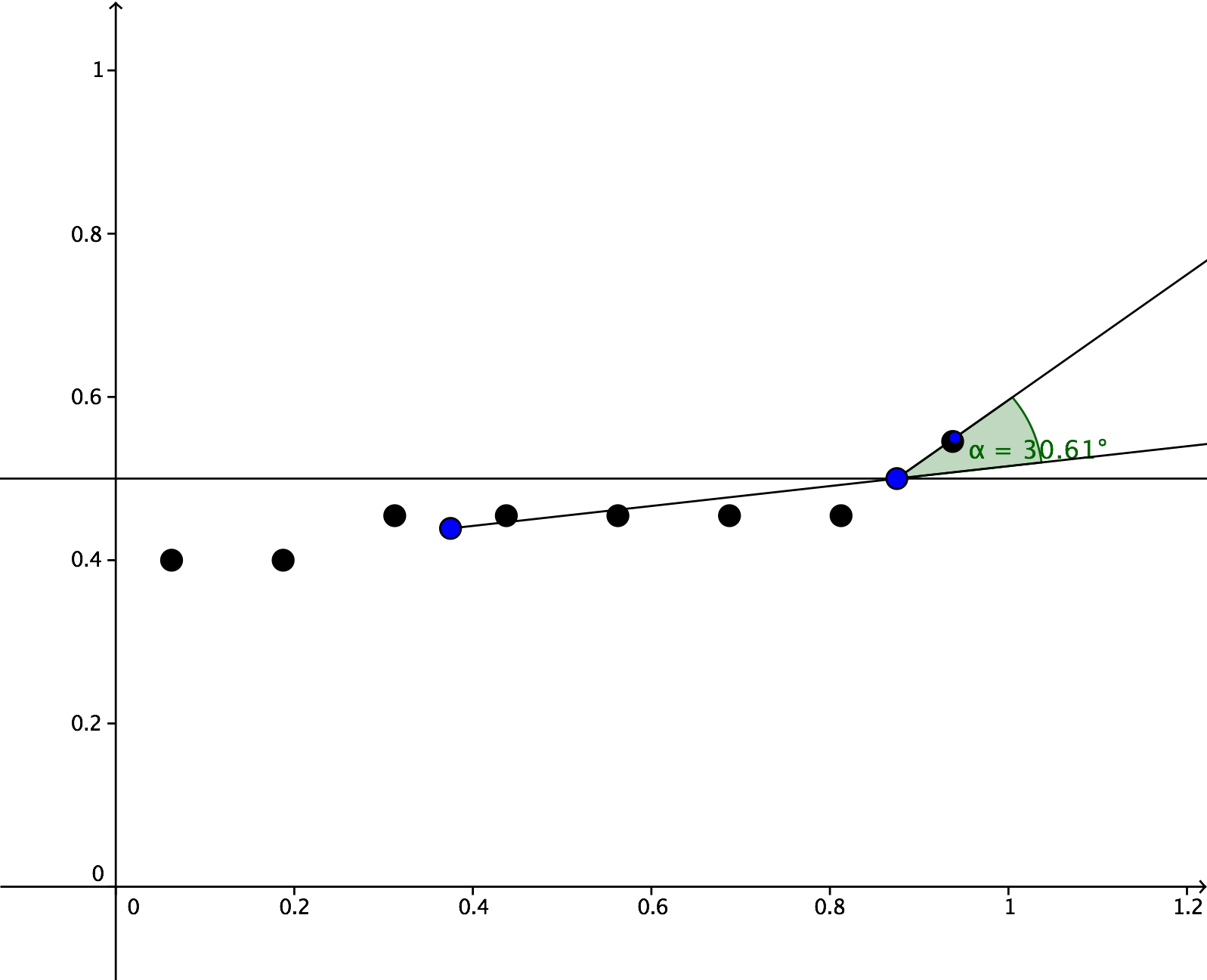}
\caption{The declinations of the elections in Table ~\ref{ModifiedCracking}.}
\label{NoCrack}
\end{figure}

\begin{Example}\label{ModifiedPackingExample}
Consider the elections in Table~\ref{ModifiedPacking}, whose declinations are visualized in Figure~\ref{NoPack}.  Note that both have 6 districts.  We can witness ``$A$-packing'' by re-allocating votes from district 2 into districts 3-6.  The ratio of maximum voter turnout in a single district in the state divided by minimum voter turnout in a single district in the state in both of these elections does not exceed 2
\begin{table}[h]
\centering
\begin{tabular}{|c||C{1cm}|C{1cm}|c||C{1cm}|C{1cm}|c|}\hline
& \multicolumn{3}{|c||}{Election 1} & \multicolumn{3}{c|}{Election 2} \\ \hline
District & \multicolumn{2}{c|}{Votes}  & Turnout & \multicolumn{2}{c|}{Votes} & Turnout  \\ \hline
 & A & B & & A & B  &  \\ \hline
1 & 4 & 5 &9 & 4 & 5& 9 \\ \hline
2 & 5 & 4 & 9 & 1 & 4 & 5 \\ \hline
3 & 5 & 4 & 9 & 6 & 4 & 10\\ \hline
4 & 5 & 4 & 9 & 6 & 4 & 10\\ \hline
5& 5 & 4 & 9 & 6 & 4 & 10 \\ \hline
6 & 5 & 4 & 9 & 6 & 4 & 10\\ \hline
$\delta$ & \multicolumn{3}{c||}{ -0.2899492} & \multicolumn{3}{c|}{-0.3349818} \\ \hline
\end{tabular}
\caption{Undetected Packing by re-allocating votes from district 2 into districts 3-6}\label{ModifiedPacking}
\end{table}

\end{Example}

\begin{figure}[h]
\centering
\includegraphics[width=2.5in]{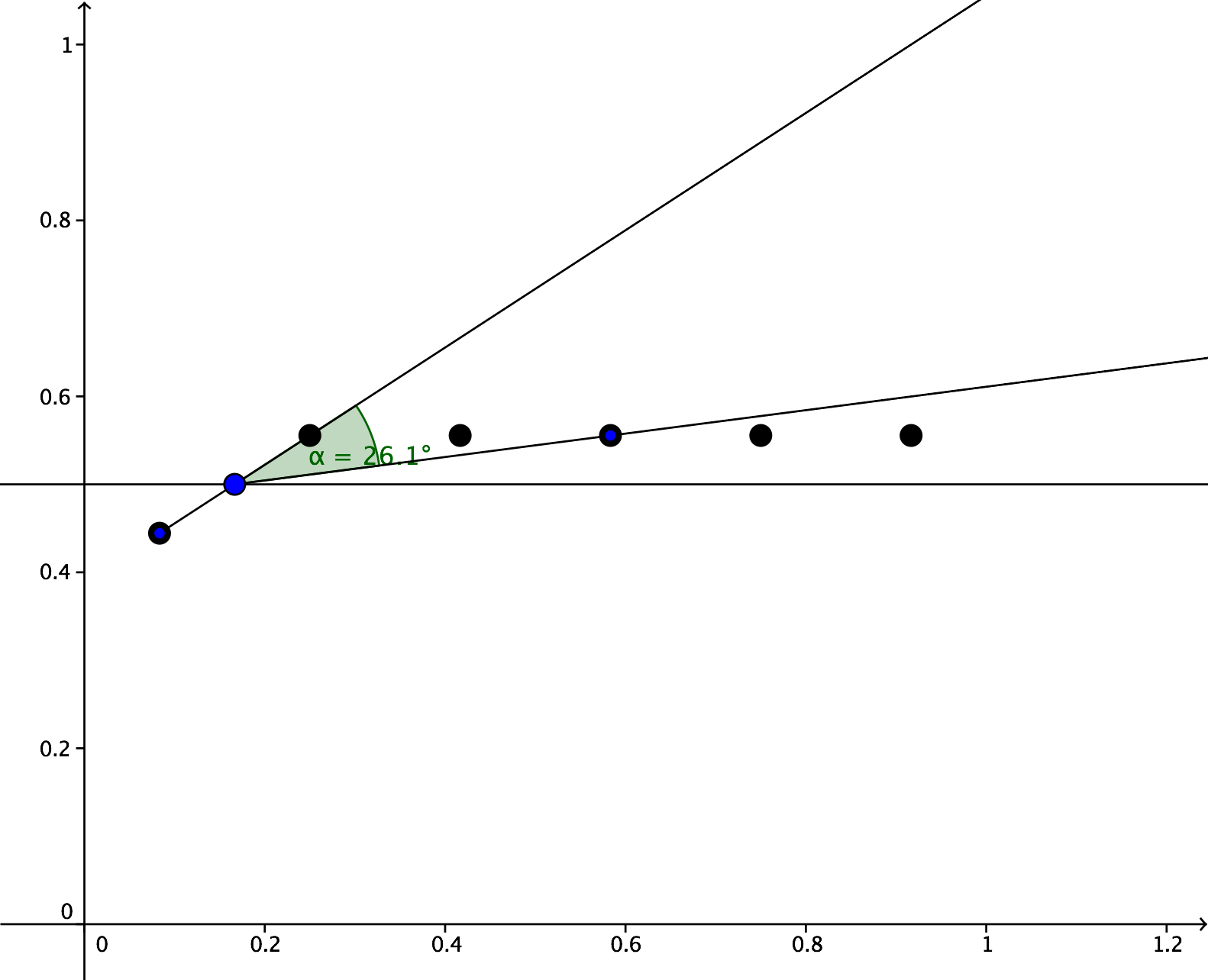}\hspace{.2in}\includegraphics[width=2.5in]{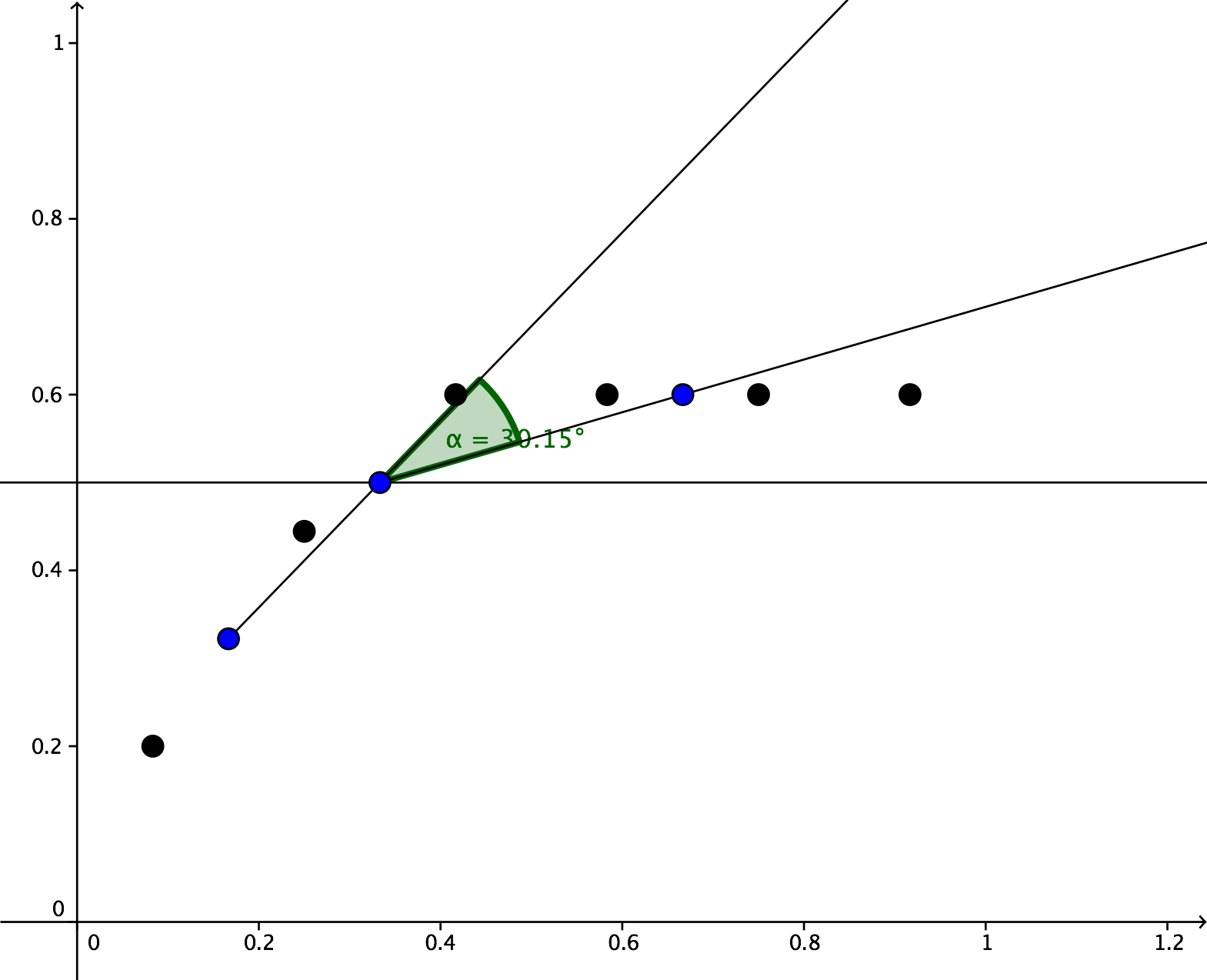}
\caption{The declinations of the elections in Table ~\ref{ModifiedPacking}.}
\label{NoPack}
\end{figure}

For each of the above examples, we pointed out the ratio of maximum turnout in a single district to minimum turnout in a single district in an effort to argue that these examples are realistic.  Indeed, Table~\ref{TurnoutTable} appeared as part of a larger table in \cite{2018arXiv180105301V}.    For Table~\ref{TurnoutTable}, $n$ and $M/m$ are defined as follows:

\begin{align*}
n &= \text{number of districts in the state} \\
M/m &= \frac{\text{maximum turnout in a single district in the state}}{\text{minimum turnout in a single district in the state}}
\end{align*}

\begin{table}[h]
\centering
\begin{tabular}{|c||c|c|c|c|c|c|c|c|c|c|c|}\hline
State  &AZ & CA & FL &  GA &  IL & IN & MD & MA & MI & MN & MO\\ \hline
$n$ & 9  &53 & 27 &  14 &18 & 9 & 8 & 9 &14 &8 & 8   \\  \hline
$M/m$  2016& 2.15 & 4.41 & 1.62&1.55 & 2.06 &1.42 &1.18 & 1.34 & 1.47 &  1.19 & 1.34  \\    \hline 
$M/m$ 2014 &  3.03 &   4.41  & 1.83 &  1.77 &  2.42 & 1.52  & 1.31 &1.58 & 1.50 & 1.19 & 1.50   \\ \hline\hline
State  & NJ & NY &  NC & OH & PA &  TN & TX & VA & WA & WI  \\ \cline{1-11}
$n$ & 12 & 27 &  13 & 16 & 18 & 9 & 36 & 11 & 10& 8  \\  \cline{1-11}
$M/m$ 2016&  1.96 & 1.83 & 1.27 & 1.34 & 1.56 & 1.31 & 2.82 & 1.42 & 1.65 & 1.53  \\ \cline{1-11}
$M/m$ 2014 & 2.38 & 4.00 & 1.56 & 1.54 & 1.83 & 1.48 &  4.10 & 1.65 &  1.65 & 1.31          \\ \cline{1-11}
\end{tabular}
\caption{Turnout ratios for 2014 and 2016 U.S. House of Representatives elections in all states with at least 8 congressional districts}
\label{TurnoutTable}
\end{table}

From Table~\ref{TurnoutTable}, we can see that the ratios of 2.25 and 2 in Examples~\ref{ModifiedCrackingExample} and \ref{ModifiedPackingExample} are not outside of the range of what is typically seen in real elections. We also note that neither Warrington's original definitions nor the modified packing and cracking definitions use the geography of the state and would fail to distinguish between situations like those in Figure~\ref{TwoPictures}.

\section{Elections with $\delta = 0$}\label{SectionDeclination0}

First note that throughout  we  relax our assumptions on the vote shares and allow $p_k \leq \frac{1}{2} \leq p_{k+1}$ (as opposed to requiring strict inequality).  Note that this does not change the definition of Declination.  The assumption here is that there may be some districts with half of the votes going to party $A$ and half to party $B$; some other electoral process was used to decide which party won in that district. Although exceedingly rare, this has happened recently in the United States, including a 2015 election for the Mississippi House of Representatives (where the winner was determined by drawing straws) and a 2017 election for the House of Delegates in Virginia (where the winner's name was pulled from a bowl).  This also allows the possibility of $\overline{y} = \frac{1}{2}$ or $\overline{z} = \frac{1}{2}$ (or both).

While no specific range of values of the Declination has been suggested to indicate a lack of partisan bias, the implication of Warrington's construction and discussion of the Declination in \cite{WarringtonDeclinationELJ} is that a Declination near 0 indicates that a districting map is fair.  This gives rise to a natural question:

\noindent\underline{Q:}  What kinds of election outcomes have $\delta = 0$?

Given the design of the Declination, one clear answer is:
\begin{Fact}   The Declination  is 0 if
\begin{equation*}
k = n-k \quad \quad \quad \text{ and } \quad \quad \quad \overline{z}-\frac{1}{2} = \frac{1}{2} - \overline{y}
\end{equation*}
That is, the seat share $S$ is equal to $\frac{1}{2}$, and $\overline{z} = 1 - \overline{y}$.
\end{Fact}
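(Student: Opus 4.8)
The plan is to verify the claim by direct substitution into the defining formulas for $\theta_A$, $\theta_B$, and $\delta$, and then to record the geometric picture that makes it transparent. First I would translate the two hypotheses into algebraic identities. The condition $k = n-k$ gives $k' = n-k = k$, hence $\frac{k'}{n} = \frac{k}{n}$. The condition $\overline{z} - \frac12 = \frac12 - \overline{y}$, after multiplying by $2$, gives $2\overline{z} - 1 = 1 - 2\overline{y}$.

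Next I would substitute these identities into
\[
\theta_A = \arctan\!\left(\frac{2\overline{z}-1}{k'/n}\right), \qquad \theta_B = \arctan\!\left(\frac{1-2\overline{y}}{k/n}\right).
\]
Since the numerators now agree and the denominators now agree, we get $\theta_A = \theta_B$, and therefore $\delta = \frac{2}{\pi}\left(\theta_A - \theta_B\right) = 0$, as claimed.

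Geometrically (and this could replace the computation), the hypotheses say exactly that $G = \left(\frac{k}{n}, \frac12\right)$ is the midpoint of the segment $\overline{FH}$: when $k = k'$ the horizontal gaps from $F$ to $G$ and from $G$ to $H$ are both $\frac{k}{2n}$, and the hypothesis $\overline{z} - \frac12 = \frac12 - \overline{y}$ makes the vertical gaps equal as well. Hence $F$, $G$, $H$ are collinear with $G$ between them, so $\overline{FG}$ and $\overline{GH}$ make the same angle with the horizontal and $\delta = 0$. There is essentially no obstacle here; the only point to check is the degenerate case $\overline{y} = \overline{z} = \frac12$, where $\overline{FG}$ and $\overline{GH}$ are both horizontal and $\theta_A = \theta_B = 0$, so $\delta = 0$ still holds — this case is permitted precisely because of the relaxed convention $p_k \le \frac12 \le p_{k+1}$ introduced just before the statement.
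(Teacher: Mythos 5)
Your proposal is correct and matches the paper's (implicit) reasoning: the paper states this Fact as immediate from the design of the Declination, and its general derivation in Lemma~\ref{xySLemma} reduces $\delta=0$ to $\frac{1-2\overline{y}}{k/n}=\frac{2\overline{z}-1}{(n-k)/n}$, which your substitution $k'=k$, $2\overline{z}-1=1-2\overline{y}$ verifies directly. Your geometric observation that $G$ is the midpoint of $\overline{FH}$, and your handling of the degenerate case $\overline{y}=\overline{z}=\frac{1}{2}$, are both consistent with the paper's conventions.
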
 

This example seems reasonable:  party $A$'s average winning margin is the same as its average losing margin, and it wins half of the seats.  If turnout in all districts is close, then this is close to proportional representation.  

Another clear  instance of the Declination being 0 is:

\begin{Fact}  The Declination is 0 if
\begin{equation*}
\overline{y} = \overline{z} = \frac{1}{2} \quad \quad \quad k \text{ is anything between } 1 \text{ and } n-1
\end{equation*}
That is, the seat share $S$ is anything except for 0 and 1, and $\overline{y} = \overline{z} = \frac{1}{2}$
\end{Fact}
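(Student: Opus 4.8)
The plan is to verify the claim by a direct substitution into the definition of the Declination, together with a short check that the hypotheses are consistent. Recall that $\delta = \frac{2}{\pi}(\theta_A - \theta_B)$, where
\[
\theta_A = \arctan\!\left(\frac{2\overline{z}-1}{k'/n}\right), \qquad
\theta_B = \arctan\!\left(\frac{1-2\overline{y}}{k/n}\right).
\]
First I would observe that the hypothesis $1 \le k \le n-1$ is exactly the statement $S \notin \{0,1\}$, and it guarantees $k' = n-k \ge 1$; hence both denominators $k'/n$ and $k/n$ are strictly positive, the two arctangents are evaluated at genuine real numbers, and the Declination is defined. This is the only place the restriction on $k$ (equivalently, on $S$) is needed.

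Next, substituting $\overline{y} = \overline{z} = \tfrac12$ makes both numerators vanish, since $2\overline{z}-1 = 0$ and $1-2\overline{y} = 0$. Therefore $\theta_A = \arctan(0) = 0$ and $\theta_B = \arctan(0) = 0$, so $\delta = \frac{2}{\pi}(0-0) = 0$, which is the claim. Geometrically this is transparent: when the winning average $\overline{z}$ and losing average $\overline{y}$ both equal $\tfrac12$, the centers of mass $F$ and $H$ lie on the same horizontal line $y = \tfrac12$ as the pivot point $G = (k/n,\tfrac12)$, so the segments $\overline{FG}$ and $\overline{GH}$ are both horizontal and the two angles they make with the horizontal are both zero.

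Finally, to confirm that the hypothesis is not vacuous for any admissible $k$, I would exhibit an explicit outcome: take $p_1 = p_2 = \cdots = p_n = \tfrac12$, which is permissible under the relaxed convention $p_k \le \tfrac12 \le p_{k+1}$ adopted at the start of the section (with the tie districts decided by some external process so that party $A$ ``loses'' the first $k$ and ``wins'' the remaining $k'$). Then $\overline{y} = \overline{z} = \tfrac12$ and $\delta = 0$. I do not anticipate any real obstacle here: the computation is a one-line evaluation of the formula, and the only points worth stating carefully are (i) that the Declination is defined precisely when $S \notin \{0,1\}$, and (ii) that outcomes with $\overline{y} = \overline{z} = \tfrac12$ genuinely exist, both of which are immediate.
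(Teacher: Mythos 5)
Your proposal is correct and matches the paper's (implicit) reasoning exactly: the paper presents this as an immediate consequence of the definition, and your substitution $\overline{y}=\overline{z}=\tfrac12$ forcing $\theta_A=\theta_B=\arctan(0)=0$ is precisely that computation, with the role of $1\le k\le n-1$ being only to ensure $\delta$ is defined, as the paper's footnote also notes. The added geometric remark ($F$, $G$, $H$ collinear on $y=\tfrac12$) and the existence check via $p_1=\cdots=p_n=\tfrac12$ under the relaxed convention $p_k\le\tfrac12\le p_{k+1}$ are consistent with the paper and fill in details it leaves unstated.
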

This example is a little more interesting; every single district is competitive.  In fact, each district is  so competitive that the vote was split right down the center.  In this instance, any seat share $S$ which is not 0 or 1 (that is, each party is winning \emph{at least one} district) gives a Declination of 0\footnote{The requirement that $S \not=0$ and $S \not=1$ is purely so that the Declination is \emph{defined}.}.  

We'd like to explore other election outcomes having a Declination of 0.   Recall that 
\begin{align*}
\theta_A &= \arctan\left(\frac{\overline{z}-\frac{1}{2}}{\frac{n-k}{2n}}\right) = \arctan\left(\frac{2\overline{z}-1}{\frac{n-k}{n}}\right) \\
\theta_B &= \arctan\left(\frac{\frac{1}{2}- \overline{y}}{\frac{k}{2n}}\right) = \arctan\left(\frac{1-2\overline{y}}{\frac{k}{n}}\right) \\
\delta &= \frac{2}{\pi} \left(\theta_A-\theta_B\right)
\end{align*}
so that
\begin{align*}
\delta = 0 & \Longleftrightarrow \arctan\left(\frac{1-2\overline{y}}{\frac{k}{n}}\right) =  \arctan\left(\frac{2\overline{z}-1}{\frac{n-k}{n}}\right) \\
& \Longleftrightarrow \frac{1-2\overline{y}}{\frac{k}{n}} = \frac{2\overline{z}-1}{\frac{n-k}{n}} \\
& \Longleftrightarrow \overline{z} = \frac{n}{2k}+\overline{y}\left(1-\frac{n}{k}\right)
\end{align*}

Suppose that $S$ is party $A$'s seat share with $0<S<1$.  This gives $(1-S)n = k$, since $k$ is the number of seats that party $A$ \emph{lost}.    Then we know that the Declination is 0 if and only if
\begin{align*}
\overline{z} &= \frac{n}{2(1-S)n}+\overline{y}\left(1-\frac{n}{(1-S)n}\right)  \nonumber \\ 
&= \left(1-\frac{1}{1-S}\right)\overline{y} + \frac{1}{2(1-S)} \label{z_equation}
\end{align*}

If we solve the above equation for $\overline{y}$ and $S$, we have that the Declination is equal to 0 if and only if
\begin{equation*}\label{y_equation}
\overline{y} =\left(1-\frac{1}{S}\right)\overline{z}+\frac{1}{2S}
\end{equation*}
or equivalently in the case of $\overline{y} \not= \overline{z}$
\begin{equation*}
S = \frac{1-2\overline{z}}{2(\overline{y}-\overline{z})}
\end{equation*}

Thus, we have just proved the following Lemma:
\begin{Lemma}\label{xySLemma}
Suppose an election has seat share $0<S<1$, average vote share in districts party $A$ lost $\overline{y}$ (so that $0 \leq \overline{y} \leq \frac{1}{2}$), and average vote share in districts party $A$ won $\overline{z}$ (so that $\frac{1}{2} \leq \overline{z} \leq 1$).  Then the Declination of this election is 0 if and only if either 
\begin{equation*}
\overline{y} = \overline{z} = \frac{1}{2}
\end{equation*}
or one of the following equivalent expressions is true:
\begin{align*}
\overline{z} &=  \left(1-\frac{1}{1-S}\right)\overline{y} + \frac{1}{2(1-S)}   \\
\overline{y} &=\left(1-\frac{1}{S}\right)\overline{z}+\frac{1}{2S} \\
S &= \frac{1-2\overline{z}}{2(\overline{y}-\overline{z})}
\end{align*}
\end{Lemma}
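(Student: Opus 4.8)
The plan is to unwind the definition of $\delta$ and reduce the condition $\delta = 0$ to an algebraic identity among $\overline{y}$, $\overline{z}$, and $S$. Most of the computation has in fact already been displayed in the paragraphs preceding the statement, so the proof is mainly a matter of assembling it and handling the degenerate case with care.

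First I would note that, since $\delta = \frac{2}{\pi}(\theta_A - \theta_B)$, we have $\delta = 0$ exactly when $\theta_A = \theta_B$. Because $\arctan$ is strictly increasing (hence injective) on $\R$, and because the denominators $\frac{k}{n}$ and $\frac{n-k}{n}$ are strictly positive by our standing assumption $1 \leq k \leq n-1$, this is equivalent to the equality of the arguments:
\[
\frac{1-2\overline{y}}{k/n} = \frac{2\overline{z}-1}{(n-k)/n}.
\]
Cross-multiplying, cancelling the common factor $n$, and solving for $\overline{z}$ yields $\overline{z} = \frac{n}{2k} + \overline{y}\left(1-\frac{n}{k}\right)$; substituting $k = (1-S)n$ (legitimate since $k$ is the number of seats party $A$ lost and $0 < S < 1$) gives the first displayed expression of the Lemma.

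Next I would establish that the three displayed expressions describe the same set. Multiplying the first by $(1-S)$ collapses it to the symmetric relation $(1-S)\overline{z} + S\overline{y} = \frac{1}{2}$; solving this for $\overline{y}$ recovers the second expression, and solving it for $S$ — which requires $\overline{y} \neq \overline{z}$ — recovers the third. Each of these steps is reversible, so the equivalences hold, with the understanding that the ``$S = \cdots$'' form is only meaningful when $\overline{y} \neq \overline{z}$.

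Finally I would treat the remaining case. If $\overline{y} = \overline{z}$, then the constraints $0 \leq \overline{y} \leq \frac{1}{2} \leq \overline{z} \leq 1$ force $\overline{y} = \overline{z} = \frac{1}{2}$, and conversely $\overline{y} = \overline{z} = \frac{1}{2}$ makes the relation $(1-S)\overline{z} + S\overline{y} = \frac{1}{2}$ hold for every $S$ (equivalently $\theta_A = \theta_B = 0$, as already observed above). This is precisely the situation not captured by the third form, which is why the statement reads ``either $\overline{y} = \overline{z} = \frac{1}{2}$ or one of the following equivalent expressions.'' I do not anticipate a genuine obstacle; the only point demanding attention is the bookkeeping around $\overline{y} = \overline{z}$, so that the $\tfrac{0}{0}$ instance of $S = \frac{1-2\overline{z}}{2(\overline{y}-\overline{z})}$ is absorbed by the separate clause rather than overlooked.
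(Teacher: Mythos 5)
Your argument is correct and follows essentially the same route as the paper, whose ``proof'' is precisely the displayed chain of equivalences preceding the Lemma (injectivity of $\arctan$, equating the arguments, substituting $k=(1-S)n$, and rearranging). Your explicit bookkeeping of the degenerate case $\overline{y}=\overline{z}=\tfrac{1}{2}$, where the $S$-form becomes $\tfrac{0}{0}$, is slightly more careful than the paper's treatment but introduces nothing new in substance.
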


This idea of writing the relationship between $\overline{z}$, $\overline{y}$, and the seat share $S$ for elections with $\delta=0$, gives rise to the following:

\begin{Theorem}\label{IntervalTheorem}
Choose $S$ to be any rational number between 0 and 1.  Depending on the value of $S$, make the following choices:
\begin{align*}
0&<S< \frac{1}{2} && \text{choose } \overline{z} \text{ with } \frac{1}{2} \leq \overline{z} \leq \frac{1}{2(1-S)} \\
S &= \frac{1}{2}&&  \text{choose either } \overline{y} \text{ with } 0 \leq \overline{y} \leq \frac{1}{2} \text{ or choose } \overline{z} \text{ with } \frac{1}{2} \leq \overline{z} \leq 1 \\
\frac{1}{2} &< S < 1  && \text{choose } \overline{y} \text{ with } 1-\frac{1}{2S} \leq \overline{y} \leq \frac{1}{2} 
\end{align*}
Then there exists an election outcome consisting of those choices, with Declination equal to 0.
\end{Theorem}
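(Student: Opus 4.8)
The plan is to reduce the statement to Lemma~\ref{xySLemma} and then exhibit the simplest possible election carrying the prescribed data. Since $S$ is rational, write $S = k'/n$ with $k',n$ positive integers (e.g.\ $n$ the denominator of $S$ in lowest terms), and put $k = n-k'$, so $1 \le k \le n-1$; an election with $n$ districts in which party $A$ wins exactly the top $k'$ districts then has seat share $S$ and a well-defined Declination. In each of the three regimes one of $\overline{y},\overline{z}$ is handed to us and the other, call it the \emph{partner}, is defined by the $\delta=0$ relation of Lemma~\ref{xySLemma}: for $S \le \tfrac12$, $\overline{y} = \bigl(1-\tfrac1S\bigr)\overline{z} + \tfrac1{2S}$; for $S \ge \tfrac12$, $\overline{z} = \bigl(1-\tfrac1{1-S}\bigr)\overline{y} + \tfrac1{2(1-S)}$. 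Build the election by giving every one of the $k$ districts party $A$ loses vote share exactly $\overline{y}$, and every one of the $k'$ districts it wins vote share exactly $\overline{z}$. As long as $0 \le \overline{y} \le \tfrac12 \le \overline{z} \le 1$, this is a legitimate outcome in the relaxed sense of Section~\ref{SectionDeclination0}: the ordering $p_1 \le \cdots \le p_k \le \tfrac12 \le p_{k+1} \le \cdots \le p_n$ holds trivially, the seat share is $S$, and the block averages are $\overline{y}$ and $\overline{z}$; since these satisfy the $\delta=0$ relation, Lemma~\ref{xySLemma} yields Declination $0$.

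Thus the real work is to check that, for each regime of $S$, the partner average lies in its admissible interval ($[0,\tfrac12]$ for $\overline{y}$, $[\tfrac12,1]$ for $\overline{z}$) precisely when the handed value lies in the interval named in the theorem. I would observe that the $\delta=0$ relation makes the partner an affine — hence monotone — function of the handed value, so it suffices to evaluate at the two endpoints. For $0 < S < \tfrac12$: at $\overline{z}=\tfrac12$ the partner is $\overline{y}=\tfrac12$, at $\overline{z}=\tfrac1{2(1-S)}$ it is $\overline{y}=0$, and $\tfrac1{2(1-S)} < 1$ since $S < \tfrac12$, so $\overline{z}$ and $\overline{y}$ both stay in range. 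For $\tfrac12 < S < 1$: at $\overline{y}=\tfrac12$ the partner is $\overline{z}=\tfrac12$, at $\overline{y}=1-\tfrac1{2S}$ it is $\overline{z}=1$, and $1-\tfrac1{2S} \ge 0$ since $S > \tfrac12$, so again nothing leaves its interval. For $S=\tfrac12$ the relation degenerates to $\overline{z}=1-\overline{y}$, which carries $[0,\tfrac12]$ onto $[\tfrac12,1]$ and back. The common endpoint $\overline{y}=\overline{z}=\tfrac12$ is also covered by the first alternative of Lemma~\ref{xySLemma}, so it needs no separate treatment.

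The step that needs care — and really the only one — is this interval verification; beyond the routine endpoint arithmetic there is no genuine obstacle, since the construction itself is dictated by the lemma. The one further subtlety is realizability by honest vote counts: if one insists on integer turnouts, one should additionally take the handed average rational (the hypotheses already force $S$ rational for it to be a seat share at all), clear denominators, and invoke the scale-invariance of the Declination noted at the start of Section~\ref{DetectingSection} to rescale the constant-vote-share election above into one with integer turnouts; if instead ``election outcome'' is read as a tuple of real vote shares, no such step is needed.
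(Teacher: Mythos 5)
Your proposal is correct and follows essentially the same route as the paper: define the partner average via the affine $\delta=0$ relation from Lemma~\ref{xySLemma}, use monotonicity to check the endpoint values ($\overline{y}$ running from $\tfrac12$ down to $0$ as $\overline{z}$ runs from $\tfrac12$ up to $\tfrac1{2(1-S)}$, and symmetrically), and handle $S=\tfrac12$ via $\overline{z}=1-\overline{y}$. The only difference is that you make explicit the realization of the averages by a concrete election (constant vote shares within each block, plus the rationality/scaling remark), a step the paper leaves implicit.
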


\begin{proof}
We prove each case separately. 

\noindent\underline{Case 1: $0<S< \frac{1}{2}$}  In this case, we have chosen $\overline{z}$ with $\frac{1}{2} \leq \overline{z} \leq \frac{1}{2(1-S)} $.  Note $\frac{1}{2(1-S)}$ is strictly increasing on $(0,1/2)$, which automatically implies that $\overline{z}$ is between $\frac{1}{2}$ and 1.  Define
\begin{equation*}
\overline{y} =\left(1-\frac{1}{S}\right)\overline{z}+\frac{1}{2S}
\end{equation*}
and note that $1-\frac{1}{S}<0$.  Thus, this guarantees that 
\begin{align*}
\left(1-\frac{1}{S}\right)\frac{1}{2(1-S)} +\frac{1}{2S} & \leq \overline{y} \leq \left(1-\frac{1}{S}\right)\frac{1}{2}+\frac{1}{2S} \\
0 & \leq \overline{y} \leq \frac{1}{2}
\end{align*}
Thus, from Lemma \ref{xySLemma} we know that this election has Declination 0.

\noindent\underline{Case 2: $S = \frac{1}{2}$}  In this case, we can choose either $\overline{y}$ or $\overline{z}$.  Suppose we chose $\overline{y}$ with $0\leq \overline{y} \leq \frac{1}{2}$.  Define
\begin{equation*}
\overline{z} = \left(1-\frac{1}{1-S}\right)\overline{y} + \frac{1}{2(1-S)} =1-\overline{y}
\end{equation*}
Then we clearly have $\frac{1}{2} \leq \overline{z} \leq 1$ and again from Lemma \ref{xySLemma} the election has Declination 0.  

The case where we choose $\overline{z}$ with $\frac{1}{2} \leq \overline{z} \leq 1$ is proved similarly.  

\noindent\underline{Case 3: $\frac{1}{2} < S < 1$}   is proved similarly to Case 1.

\end{proof}

We can visualize the restrictions from Theorem \ref{IntervalTheorem} in Figure~\ref{yzRestrictionsFigure}.  For $0<S  <\frac{1}{2}$, we can choose $\overline{z}$ in the blue segment above $S$.  This determines $\overline{y}$ using the linear equation $\overline{y} = \left(1-\frac{1}{S}\right)\overline{z}+\frac{1}{2S}$.

For $S = \frac{1}{2}$, we could choose $\overline{y}$ or $\overline{z}$ in their corresponding red or blue segments above $S = \frac{1}{2}$.  The choice of $\overline{y}$ determines $\overline{z}$ and vice versa.  In this instance, $\overline{y} = 1-\overline{z}$.

And for $\frac{1}{2}<S<1$, we can choose $\overline{y}$ in the red segment above $S$.  This determines $\overline{z}$ using the linear equation $\overline{z} =\left(1-\frac{1}{1-S}\right)\overline{y} + \frac{1}{2(1-S)} $.

\begin{figure}[h]
\centering
\includegraphics[width=3.5in]{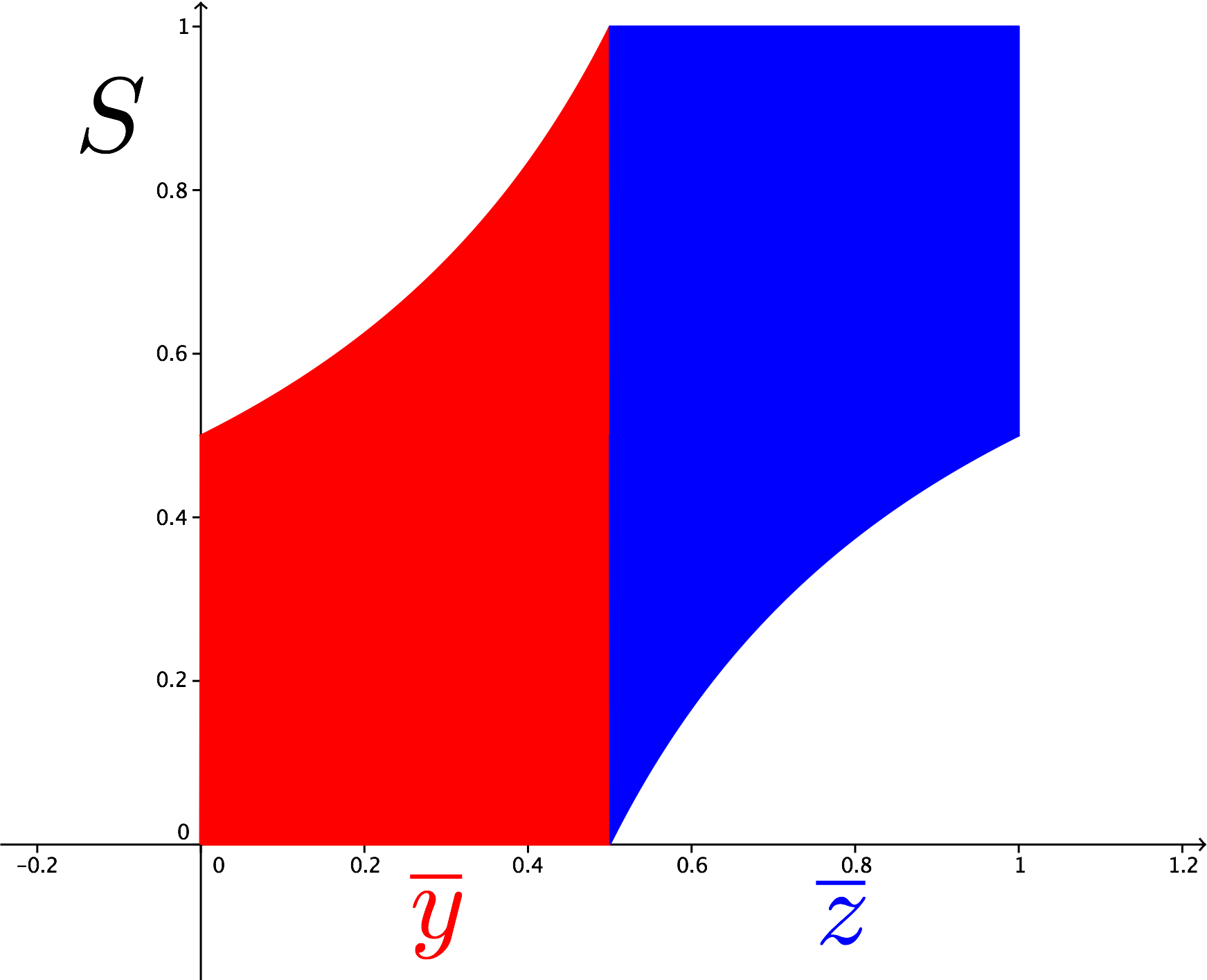}
\caption{Possible values for $\overline{y}$ and $\overline{z}$, given seat share $S$ and $\delta=0$.}
\label{yzRestrictionsFigure}
\end{figure}

Note that Lemma \ref{xySLemma} also gives us a ``Declination-preferred seat share'' for a set of election data.  That is, given average vote share in districts party $A$ lost $\overline{y}$, and average vote share in districts party $A$ won $\overline{z}$, we know that the Declination is 0 if and only if either $\overline{y} = \overline{z} = \frac{1}{2}$, or if 
\begin{equation*}
S =\frac{1-2\overline{z}}{2(\overline{y}-\overline{z})}
\end{equation*}
This inspires the following definition:
\begin{Definition}
Suppose an election has average vote share in districts party $A$ lost $\overline{y}$ ($0 \leq \overline{y} \leq \frac{1}{2}$) and average vote share in districts party $A$ won $\overline{z}$ ($\frac{1}{2} \leq \overline{z} \leq 1$) with $\overline{y} \not= \overline{z}$.  Then the \emph{Declination-preferred seat share} for this set of data is 
\begin{equation*}
S =\frac{1-2\overline{z}}{2(\overline{y}-\overline{z})}
\end{equation*}
\end{Definition}

Note that for $S =\frac{1-2\overline{z}}{2(\overline{y}-\overline{z})}$,
\begin{equation*}
\frac{\partial S}{\partial \overline{y}} = \frac{2\overline{z}-1}{2(\overline{y}-\overline{z})^2} \geq 0
\end{equation*}
which indicates that, as $\overline{y}$ goes up (which can be reasonably interpreted as party $A$ getting more cracked) then $S$ goes up (party $A$ wins more seats).   In Figure~\ref{SLevelCurvesFigure} we can see the level curves for $S =\frac{1-2\overline{z}}{2(\overline{y}-\overline{z})}$ with the allowable values of $\overline{y}$ and $\overline{z}$, a different visualization of the information in Figure~\ref{yzRestrictionsFigure}.    We see, for example, that when the seat share $S$ is low, $\overline{y}$ can take on any value but $\overline{z}$ is quite restricted.  Conversely, when $S$ is high, $\overline{z}$ can take on any value but $\overline{y}$ is restricted.

\begin{figure}[h]
\centering
\includegraphics[width=3.5in]{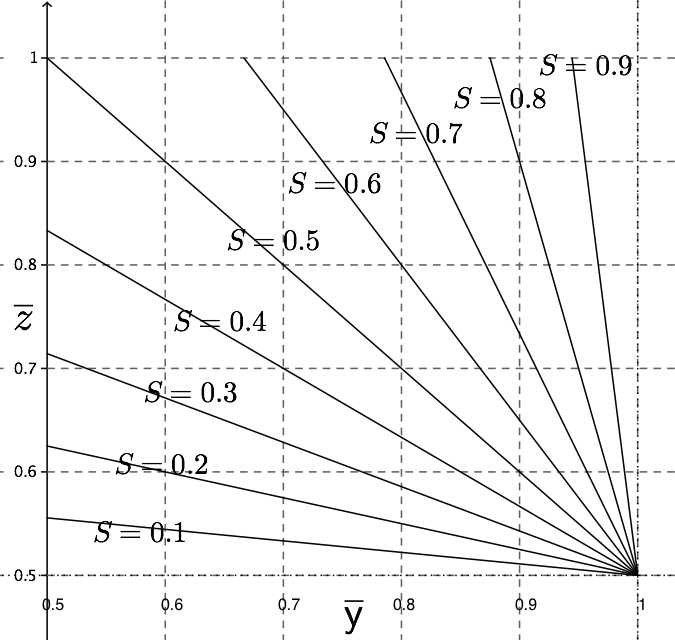}
\caption{Level curves for the function $S =\frac{1-2\overline{z}}{2(\overline{y}-\overline{z})}$.}
\label{SLevelCurvesFigure}
\end{figure}

\section{The Possible Relationships between Vote Share, Seat Share, and Turnout when $\delta = 0$}\label{SectionVSTDeclination0}

In this section, we explore the following general question:

\noindent\underline{Q:}  Fix a seat share $S$, and fix an upper bound on the ratio of turnout in two districts $C$.  That is, fix $C \geq 1$, and consider only elections where
\begin{equation*}
\frac{\text{maximum turnout in any single district}}{\text{minimum turnout in any single district}} \leq C
\end{equation*}
What are the possible values for the vote share $V$ when the Declination 0?

We will also address the related question:

\noindent\underline{Q:} With the same restrictions as above, what is the \emph{most extreme} vote share $V$ when the Declination is 0?

\subsection{$C  =1 $}\label{C1Section}

If we set the maximum ratio of turnout in two districts to be 1, we are insisting that turnout is equal in every district.  It is worth noting that, while this is not a reasonable assumption, it is an assumption that was generally made in discussions of the Efficiency Gap.  We suspect that this is because \emph{population} must be equal in congressional districts, as well as the fact that assuming turnout is equal makes the Efficiency Gap reduce to a simple equation in vote share and seat share.

We have the following Theorem:

\begin{Theorem}\label{VSC1Theorem}
Suppose an election has seat share $S$ and Declination 0.  Suppose turnout in each district is the same.  Then the vote share $V$ can take on any rational value in the following ranges:
\begin{align*}
\text{if } 0<S\leq \frac{1}{2} & & \frac{S}{2(1-S)} \leq V &\leq \frac{1}{2} \\
\text{ if } \frac{1}{2} < S < 1 && \frac{1}{2} \leq V &\leq \frac{3S-1}{2S}
\end{align*}
\end{Theorem}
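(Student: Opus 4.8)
The plan is to exploit the one fact that makes the equal-turnout case tractable: when every district has the same turnout, the statewide vote share is just the unweighted average of the district vote shares, so
$V = \frac1n\sum_{i=1}^n p_i = \frac1n\bigl(k\overline y + (n-k)\overline z\bigr) = (1-S)\overline y + S\overline z$,
using $k = (1-S)n$. Thus the question becomes: what is the image of the affine function $(\overline y,\overline z)\mapsto (1-S)\overline y + S\overline z$ over the set of $(\overline y,\overline z)$ with seat share $S$ and $\delta=0$? Lemma~\ref{xySLemma} says this set is (a point, or) a line segment, and Theorem~\ref{IntervalTheorem} pins down exactly which segment. Since an affine function carries a closed interval to a closed interval, computing $V$ at the two endpoints will give the whole range.

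I would fix a rational $S\in(0,1)$ and split into $0<S\le\frac12$ and $\frac12<S<1$. In the first case, use Lemma~\ref{xySLemma} in the form $\overline y = (1-\frac1S)\overline z + \frac1{2S}$ and substitute into $V = (1-S)\overline y + S\overline z$, which collapses to the affine relation $V = (2-\frac1S)\overline z + \frac{1-S}{2S}$ (the degenerate branch $\overline y=\overline z=\frac12$ just gives $V=\frac12$, the $\overline z=\frac12$ endpoint, so it is already included). By Theorem~\ref{IntervalTheorem} the admissible $\overline z$ range is the closed interval $[\frac12,\frac1{2(1-S)}]$; I would also note this is exactly the set of $\overline z$ for which the associated $\overline y$ lands in $[0,\frac12]$, so nothing is lost and the resulting $V$-interval is tight. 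Because $2-\frac1S\le 0$ on $(0,\frac12]$ the map is non-increasing, so $V$ ranges between its endpoint values, and a short computation gives $V(\frac12)=\frac12$ and $V(\frac1{2(1-S)})=\frac{S}{2(1-S)}$, i.e. $\frac{S}{2(1-S)}\le V\le\frac12$. The case $\frac12<S<1$ is handled symmetrically, now taking $\overline z = (1-\frac1{1-S})\overline y + \frac1{2(1-S)}$ as the free relation, writing $V = \frac{1-2S}{1-S}\,\overline y + \frac{S}{2(1-S)}$, using the admissible interval $\overline y\in[1-\frac1{2S},\frac12]$ from Theorem~\ref{IntervalTheorem}, and evaluating the endpoints to get $V(\frac12)=\frac12$ and $V(1-\frac1{2S})=\frac{3S-1}{2S}$.

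To get the ``can take on any rational value'' assertion rather than merely ``the range is an interval,'' I would argue backwards: given a rational target $V$ in the stated interval, solve the affine relation for $\overline z$ (resp. $\overline y$). The coefficients are rational, and when $S\ne\frac12$ the slope is nonzero, so the corresponding $\overline z$ (resp. $\overline y$) is rational and lies in the admissible closed interval; when $S=\frac12$ the interval degenerates to $\{\frac12\}$ and the only achievable value is $V=\frac12$, consistent with the formula $\frac{S}{2(1-S)}=\frac12$. Then invoke Theorem~\ref{IntervalTheorem} with this rational $S$ and this rational $\overline z$ (resp. $\overline y$) to produce an actual election outcome with $\delta=0$, and read off its vote share as $(1-S)\overline y + S\overline z = V$.

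I do not expect a genuine obstacle here: once the identity $V=(1-S)\overline y + S\overline z$ is in hand, the whole argument is linear and rests entirely on Lemma~\ref{xySLemma} and Theorem~\ref{IntervalTheorem}. The only place demanding care is the bookkeeping: keeping the two cases (and their roles of $\overline y$ versus $\overline z$) straight, correctly evaluating the four endpoint expressions, and confirming that the $\overline z$- and $\overline y$-intervals supplied by Theorem~\ref{IntervalTheorem} are not just sufficient but exhaustive, which amounts to re-deriving the bounds $0\le\overline y\le\frac12$ and $\frac12\le\overline z\le1$ through the linear $\delta=0$ relation so that the claimed $V$-range is sharp.
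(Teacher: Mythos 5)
Your proposal is correct and follows essentially the same route as the paper's proof: the identity $V=(1-S)\overline{y}+S\overline{z}$, substitution via Lemma~\ref{xySLemma}, monotonicity of the resulting affine function in $\overline{z}$ (resp.\ $\overline{y}$), and evaluation at the endpoints supplied by Theorem~\ref{IntervalTheorem}. Your explicit back-solving for a rational $\overline{z}$ given a rational target $V$ is a slightly more careful treatment of the ``any rational value'' claim than the paper's appeal to continuity, but it is the same argument in substance.
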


\begin{proof}
Suppose an election has seat share $S$, turnout in each district is the same, and Declination is 0.  Recall that $S$ is the percentage of districts that party $A$ won.  Thus, we know that
\begin{equation*}
V = (1-S)\overline{y} + S \overline{z}
\end{equation*}

\noindent\underline{Case 1: $0<S\leq \frac{1}{2}$}
Since the Declination is 0, Lemma \ref{xySLemma} tells us that $\overline{y} = \left(1-\frac{1}{S}\right)\overline{z}+\frac{1}{2S}$ so that
\begin{align*}
V &= (1-S) \left(\left(1-\frac{1}{S}\right)\overline{z}+\frac{1}{2S}\right) + S \overline{z} \\
&= 2\overline{z} + \frac{1-2\overline{z}}{2S} - \frac{1}{2}
\end{align*}
Note that  $\frac{\partial V}{\partial \overline{z}} = 2-\frac{1}{S}$, which is either negative or 0 since $0<S\leq \frac{1}{2}$.  Thus, from Theorem \ref{IntervalTheorem} (and the fact that $V$ is continuous in $\overline{z}$),  we know that $V$ can take on all values between its minimum at $\overline{z} = \frac{1}{2(1-S)}$ and its maximum at $\overline{z} = \frac{1}{2}$.  We now calculate $V$ when $\overline{z} = \frac{1}{2(1-S)}$:
\begin{align*}
V &=  2\frac{1}{2(1-S)} + \frac{1-2\frac{1}{2(1-S)}}{2S} - \frac{1}{2} \\
&= \frac{S}{2(1-S)}
\end{align*}
and $V$ when $\overline{z} = \frac{1}{2}$:
\begin{align*}
V &=  2\frac{1}{2} + \frac{1-2\frac{1}{2}}{2S} - \frac{1}{2} \\
&= \frac{1}{2}
\end{align*}

\noindent\underline{Case 2:  $\frac{1}{2} < S < 1$}  is proved similarly.

\end{proof}

The range of possible vote-share, seat-share pairs $(V,S)$ for  elections with $\delta=0$ when turnout ratio $C = 1$ is given in Figure~\ref{VSDeclination0C1LevelCurves}\footnote{Note that, although the bounds on $V$ are given in terms of $S$, we have displayed out figures with $V$ on the horizontal axis, as that is the standard.}.  The  level curves of $\overline{y}$ and $\overline{z}$ can also be seen overlaid in Figure~\ref{VSDeclination0C1LevelCurves}.
\begin{figure}[h]
\centering
\includegraphics[width=3.5in]{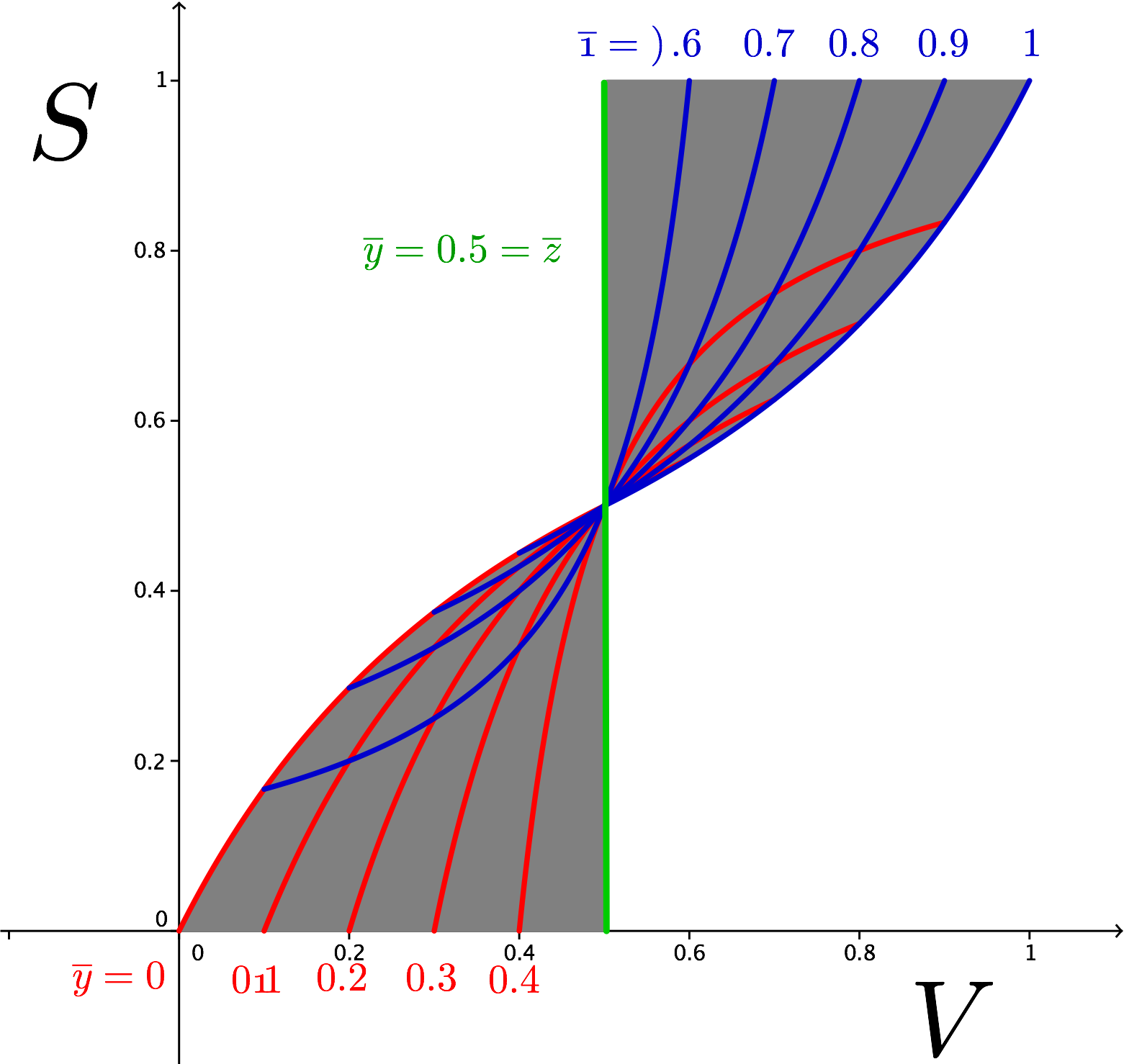}
\caption{Pairs of seat shares $S$ and vote shares $V$ that can have Declination 0 when turnout in every district is the same, overlaid with level curves for $\overline{y}$ and $\overline{z}$.}
\label{VSDeclination0C1LevelCurves}
\end{figure}

\begin{Example}

One example of interest (which is typically called ``proportionality'') is when seat share is equal to vote share: $V=S$.  Typically, this is a situation which is seen as ``fair'' and strong deviation from proportionality is often cited as indication of gerrymandering.  Assuming that $V = S$, $C=1$, and using Lemma \ref{xySLemma} we have
\begin{align*}
V &= (1-S)\overline{y} + S \overline{z} \\
&= (1-S)\left(\left(1-\frac{1}{S}\right)\overline{z}+\frac{1}{2S}\right) + S \overline{z} \\
&= 2\overline{z}+ \frac{1-2\overline{z}}{2S}-\frac{1}{2} \\
&= 2\overline{z}+ \frac{1-2\overline{z}}{2V}-\frac{1}{2} 
\end{align*}
so that
\begin{equation*}
V^2+\left(\frac{1}{2}-2\overline{z}\right)V +\overline{z}-\frac{1}{2} =0
\end{equation*}
which is a quadratic equation with solutions:
\begin{equation*}
V = 2\overline{z}-1 \quad \quad \text{ or } \quad \quad V= \frac{1}{2}
\end{equation*}
We note that, if $V = S = 2\overline{z}-1$, then from Lemma \ref{xySLemma}, we have
\begin{align*}
\overline{y} &= \left(1-\frac{1}{S}\right)\overline{z}+\frac{1}{2S} \\
&= \left(1-\frac{1}{2\overline{z}-1}\right)\overline{z}+\frac{1}{2(2\overline{z}-1)} \\
&= \overline{z}-\frac{1}{2}
\end{align*}
which implies that, in the situation of $V = S$, $C=1$, we similarly have
\begin{equation*}
V = 2\overline{y}
\end{equation*}

Thus when vote share is equal to seat share, declination is equal to zero and voter turnout is equal, the average margin of victory for party $A$ is the same as the average margin of victory for party $B$.  

\end{Example}

\subsection{$C>1$}\label{CMore1Section}  Suppose we now fix $C>1$, fix $0<S<1$, and consider all election outcomes with seat share $S$ and Declination 0, such that the ratio of turnouts among two districts is bounded above by $C$:

\begin{equation*}
\frac{\text{maximum turnout in any single district}}{\text{minimum turnout in any single district}} \leq C
\end{equation*}

We'd like to find the range of all possible vote shares for such an election.  Let 
\begin{equation*}
C_i = \frac{\text{turnout in district } i}{\text{minimum turnout in any district}}
\end{equation*}  
so that $1 \leq C_i \leq C$.  Then we'd like to find the range of all possible values of 
\begin{equation*}
V = \frac{\sum_{i=1}^kC_ip_i + \sum_{j=k+1}^nC_jp_j}{\sum_{\ell=1}^n C_\ell}
\end{equation*}
with the restriction (from Lemma \ref{xySLemma}) that 
\begin{equation}\label{y_z_relationship}
S\overline{y}+(1-S)\overline{z} = \frac{1}{2}
\end{equation}
and the restriction (from Theorem \ref{IntervalTheorem})
\begin{align*}
\text{if } 0&<S\leq \frac{1}{2} && \text{then } \frac{1}{2} \leq \overline{z} \leq \frac{1}{2(1-S)} \\
\text{if } \frac{1}{2} &< S < 1  && \text{then } 1-\frac{1}{2S} \leq \overline{y} \leq \frac{1}{2} 
\end{align*}

Recall that $S$ is party $A$'s vote share, so that $S = \frac{n-k}{n}$ and $1-S = \frac{k}{n}$.  Recalling the definitions of $\overline{y}$ and $\overline{z}$, equation \eqref{y_z_relationship}  gives
\begin{align*}
S\frac{p_1+p_2+ \cdots +p_k}{k}+(1-S)\frac{p_{k+1}+p_{k+2}+ \cdots + p_n}{n-k} &= \frac{1}{2} \\
\frac{S}{1-S}\left(p_1+p_2+ \cdots + p_k\right) + \frac{1-S}{S}\left(p_{k+1}+p_{k+2}+ \cdots + p_n\right) &= \frac{n}{2}
\end{align*}

We have the following:
\begin{Theorem}\label{C>1Theorem}
Suppose an election has seat share $S$ and Declination 0.  Suppose that
\begin{equation*}
\frac{\text{maximum turnout in any district}}{\text{minimum turnout in any district}} = C
\end{equation*}
Then if $0 < S \leq \frac{1}{2}$, the vote share $V$ can take on any rational value in the following ranges:
\begin{align*}
 \frac{S}{2(1-S)(C(1-S)+S)} \leq V &\leq \frac{1}{2} &&\text{ if } C \leq \frac{(1-S)^2}{S^2} \\
 \frac{S}{2(1-S)(C(1-S)+S)}  \leq V &\leq \frac{CS}{2(1-S)((C-1)S+1)} && \text{ if } C >  \frac{(1-S)^2}{S^2} 
 \end{align*}
and if $\frac{1}{2}<S<1$, the  vote share $V$ can take on any rational value in the following ranges:
 \begin{align*}
\frac{1}{2} \leq V &\leq \frac{2(C-1)S^2+3S-1}{2S((C-1)S+1)} && \text{ if } C \leq \frac{S^2}{(1-S)^2} \\
\frac{C(2S-1)(1-S)+2S^2}{2S(C(1-S)+S)} \leq V &\leq \frac{2(C-1)S^2+3S-1}{2S((C-1)S+1)} && \text{ if } C > \frac{S^2}{(1-S)^2}
\end{align*}

\end{Theorem}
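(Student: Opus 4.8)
The plan is to combine the algebraic description of $\delta=0$ from Lemma~\ref{xySLemma} with the feasibility ranges for $\overline{y},\overline{z}$ from Theorem~\ref{IntervalTheorem}, and then to optimize $V=\frac{\sum_{i=1}^{k}C_ip_i+\sum_{j=k+1}^{n}C_jp_j}{\sum_{\ell=1}^{n}C_\ell}$ over every district-level configuration compatible with the prescribed seat share $S$ and turnout ratio $C$. I would prove the statement only for $0<S\le\frac12$; the range for $\frac12<S<1$ then follows from the symmetry $(V,S)\mapsto(1-V,1-S)$, which corresponds to viewing the same election from party $B$'s side — this negates $\delta$ (so $\delta=0$ is preserved), fixes $C$, and carries the two listed pairs of bounds for $S<\frac12$ onto the two listed pairs for $S>\frac12$. \emph{Step 1 (reduce the constraints to one free parameter).} By Lemma~\ref{xySLemma}, for $0<S<1$ the condition $\delta=0$ is equivalent to $S\overline{y}+(1-S)\overline{z}=\frac12$ (the case $\overline{y}=\overline{z}=\frac12$ included), so I eliminate $\overline{y}=\frac{1}{2S}-\frac{1-S}{S}\,\overline{z}$, and Theorem~\ref{IntervalTheorem} says the remaining parameter ranges exactly over $\frac12\le\overline{z}\le\frac{1}{2(1-S)}$. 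Writing $k'=n-k$ for the number of seats $A$ wins, so $k/n=1-S$ and $k'/n=S$, and writing $C_\ell\in[1,C]$ for the relative turnouts with $\min_\ell C_\ell=1$, $\max_\ell C_\ell=C$, the problem becomes: over all $n$, all admissible $\{p_\ell\}$ with $p_i\le\frac12\le p_j$ and the prescribed group averages $\overline{y},\overline{z}$, and all turnouts $\{C_\ell\}$, determine the extreme values of $V$ and show every intermediate rational value occurs.

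\emph{Step 2 (reduce to extremal configurations — the crux).} I would argue that an extreme value of $V$ is attained by a configuration of very restricted shape: every district lost by $A$ has vote share exactly $\overline{y}$, every district won by $A$ has vote share exactly $\overline{z}$, and every turnout equals either $1$ or $C$, with all won districts at one of these two levels and all lost districts at the other. For fixed vote shares $V$ is linear-fractional in the $C_\ell$, hence monotone in each and optimized when every $C_\ell\in\{1,C\}$; and since every lost-district vote share is $\le\frac12\le$ every won-district vote share, raising a won district's turnout raises $V$ while raising a lost district's turnout lowers it, which pins down the "all won at one extreme, all lost at the other" pattern. Collapsing the $p_\ell$ within each group to the common group average can then only push $V$ further toward its extreme. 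This is the delicate step, and the one I expect to be the main obstacle: one must verify that allowing the $p_\ell$ to vary within a group while correlating turnout with vote share cannot beat the collapsed configuration, and that the reduced optimum is genuinely global across every turnout regime.

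\emph{Step 3 (optimize the reduced problem).} After the reduction, $V=\frac{\overline{y}\,T_A+\overline{z}\,T_B}{T_A+T_B}$, where $T_A,T_B$ are the total relative turnouts in the lost and won districts; since $\overline{y}\le\overline{z}$ this increases with $T_B/T_A$, whose extreme values $\frac{k'C}{k}=\frac{SC}{1-S}$ and $\frac{k'}{kC}=\frac{S}{(1-S)C}$ give the candidate largest and smallest $V$. Substituting $\overline{y}=\frac{1}{2S}-\frac{1-S}{S}\,\overline{z}$ makes $V$ affine in $\overline{z}$ on $[\frac12,\frac{1}{2(1-S)}]$, and the coefficient of $\overline{z}$ has the same sign as $CS^2-(1-S)^2$ in the maximizing case and the opposite sign — always negative, since $(1-S)^2\ge S^2$ for $S\le\frac12$ — in the minimizing case. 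Hence the maximum sits at $\overline{z}=\frac{1}{2(1-S)}$, giving $V=\frac{CS}{2(1-S)((C-1)S+1)}$, when $C>\frac{(1-S)^2}{S^2}$, and at $\overline{z}=\frac12$, giving $V=\frac12$, when $C\le\frac{(1-S)^2}{S^2}$; the minimum always sits at $\overline{z}=\frac{1}{2(1-S)}$, giving $V=\frac{S}{2(1-S)(C(1-S)+S)}$. Routine simplification (e.g.\ $(C-1)S+1=CS+1-S$) puts these in the stated closed forms.

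\emph{Step 4 (realize all intermediate rational values).} Fix a rational $S$ and take $n$ a large multiple of its denominator. Starting from the two endpoint configurations of Step~3, I would interpolate using two independent moves: sliding $\overline{z}$ through rationals in $[\frac12,\frac{1}{2(1-S)}]$, and letting a rational fraction of the won (respectively lost) districts carry turnout $C$ and the rest turnout $1$. Since $V$ is a continuous — indeed rational — function of these parameters on a connected rational domain, and clearing denominators makes all vote counts and turnouts integers, every rational $V$ strictly between the minimum and maximum is attained; the endpoints themselves come from the Step~3 configurations, with the degenerate cases $\overline{z}=\frac12$ and $\overline{y}=\frac12$ handled by the all-$\frac12$ election. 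Together with the symmetry reduction of $\frac12<S<1$ to $0<S\le\frac12$, this yields exactly the ranges asserted.
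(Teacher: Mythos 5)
Your plan reaches the paper's bounds by essentially the paper's route: both arguments come down to evaluating $V$ at corner configurations (all lost districts at vote share $\overline{y}$ with one turnout extreme, all won districts at $\overline{z}$ with the other) and identifying the threshold $C=\frac{(1-S)^2}{S^2}$ at which the maximum first exceeds $\frac{1}{2}$. Your Step 3 computations are correct and reproduce the paper's endpoints exactly; your symmetry reduction $(V,S)\mapsto(1-V,1-S)$ for $\frac{1}{2}<S<1$ is a clean substitute for the paper's ``proved similarly,'' and your Step 4 supplies the intermediate-value argument that the paper leaves implicit.

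The genuine gap is the one you flagged in Step 2, and the justification you offer there is wrong as stated. Since $\frac{\partial V}{\partial C_m}=\frac{p_m-V}{\sum_\ell C_\ell}$, raising a district's turnout raises $V$ exactly when $p_m>V$, not when $p_m>\frac{1}{2}$. For the maximization this matters precisely in the interesting regime $C>\frac{(1-S)^2}{S^2}$, where the maximum exceeds $\frac{1}{2}$: a won district with $\frac{1}{2}\le p_j<V$ should receive turnout $1$, not $C$, so ``all won districts at one extreme, all lost at the other'' does not follow from your monotonicity claim. Likewise, collapsing the $p_\ell$ to the group averages is harmless only \emph{after} the turnouts are constant within each group, so the two reductions cannot be performed in the order and generality you describe. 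The conclusion is nevertheless true and can be repaired: for fixed $p$ the map $C\mapsto V$ is linear--fractional on a box, hence extremized at vertices, and for fixed turnouts $V$ is linear in $p$ on the polytope cut out by the single constraint $\frac{S}{1-S}\sum_i p_i+\frac{1-S}{S}\sum_j p_j=\frac{n}{2}$, so one need only compare a finite list of vertex configurations. This is in effect what the paper does with its trade-off argument (increasing $\sum_j p_j$ by $h$ costs $\frac{(1-S)^2}{S^2}h$ from $\sum_i p_i$, so $V$ can exceed $\frac{1}{2}$ only if the available turnout weight ratio $C$ beats $\frac{(1-S)^2}{S^2}$). To be fair, the paper asserts its extremal configurations with comparable informality (``this weighted average is clearly minimized when\dots''), so your proposal is not off-track; it simply has not yet closed the step that both you and the paper treat as obvious.
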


\begin{proof}
\noindent\underline{Case 1: $0<S\leq \frac{1}{2}$}  In this case there is a restriction on $\overline{z}$.  

Note that we can re-write this situation as looking for all possible values of
\begin{equation}\label{V_Case1}
V = \frac{\sum_{i=1}^k(C_i-1)p_i + \sum_{j=k+1}^n(C_j-1)p_j+ \sum_{s=1}^np_s }{\sum_{\ell=1}^n (C_\ell-1) + n}
\end{equation}
with the constraints that
\begin{align*}
0 &\leq C_\ell-1 \leq C-1  \quad \quad \ell=1, 2, \dots, n \\
\frac{S}{1-S}\left(p_1+p_2+ \cdots + p_k\right) &+ \frac{1-S}{S}\left(p_{k+1}+p_{k+2} + \cdots p_n \right) = \frac{n}{2} \\
0 & \leq p_i \leq \frac{1}{2}  \quad \quad i=1, 2, \dots, k \\
\frac{1}{2} & \leq p_j \leq 1\quad \quad j=k+1, k+2, \dots, n  \quad \text{ with } \quad \overline{z} \leq  \frac{1}{2(1-S)}
\end{align*}

Note that this implies that 
\begin{align*}
\sum_{s=1}^np_s &= p_1+p_2+ \cdots + p_k + \frac{S}{1-S}\left(\frac{n}{2} - \frac{S}{1-S}\left(p_1+p_2+ \cdots +p_k\right)\right) \\
&= \frac{1-2S}{(1-S)^2}\left(p_1+ p_1+\cdots + p_k \right) + \frac{Sn}{2(1-S)}
\end{align*}
Since $0<S \leq \frac{1}{2}$, we know that $\frac{1-2S}{(1-S)^2} \geq 0$ so that $\sum_{s=1}^np_s$ is always at least as large as $\frac{Sn}{2(1-S)}$.  Note that we can think of equation \eqref{V_Case1} as a weighted average of the $p_i$s, $p_j$s, and $p_s$s with the aforementioned constraints.  Finally, because $0<S \leq \frac{1}{2}$, we have  $\frac{S}{1-S} \leq 1$.  Thus, this weighted average is clearly minimized when $p_i=0$,  $C_i = C$ for $i=1, 2, \dots, k$, and $C_j = 1$ for $j = k+1, k+2, \dots, n$.  In this case, 
\begin{align*}
V &= \frac{\sum_{i=1}^kC_ip_i + \sum_{j=k+1}^nC_jp_j}{\sum_{\ell=1}^n C_\ell}  \\
&= \frac{0 \cdot Ck+ \frac{1}{2(1-S)} (n-k)}{kC+n-k} = \frac{\frac{1}{2(1-S)}Sn}{(1-S)nC+Sn} \\
&= \frac{S}{2(1-S)(C(1-S)+S)} 
\end{align*}
Thus, we have our lower bound for Case 1.  

To calculate the upper bound for Case 1, we firstly note that if all $p_i$s are $\frac{1}{2}$, then the constraints are satisfied and $V = \frac{1}{2}$.   Next, we  keep in mind that
\begin{equation*}
\frac{S}{1-S}\left(p_1+p_2+ \cdots + p_k\right) + \frac{1-S}{S}\left(p_{k+1}+p_{k+2} + \cdots p_n \right) = \frac{n}{2} 
\end{equation*}
so that if  $p_{k+1}+p_{k+2} + \cdots+p_n$ increases by $h$, then  $p_1+p_2+ \cdots +p_k$ decreases by $\frac{(1-S)^2}{S^2}h>h$.  Thus, the only way that
\begin{equation*}
V = \frac{\sum_{i=1}^kC_ip_i + \sum_{j=k+1}^nC_jp_j}{\sum_{\ell=1}^n C_\ell}
\end{equation*}
could be larger than $\frac{1}{2}$ is if $C \geq \frac{(1-S)^2}{S^2}$, in which case the largest possible value for $V$ would be when $p_i=0, C_i=1$ for $i=1, 2, \dots, k$ and $p_j=\frac{1}{2(1-S)}$, $C_j = C$ for $j = k+1, k+2, \dots, n$.  This gives:
\begin{align*}
V &= \frac{\sum_{i=1}^kC_ip_i + \sum_{j=k+1}^nC_jp_j}{\sum_{\ell=1}^n C_\ell}  \\
&= \frac{0 \cdot k + \frac{1}{2(1-S)}C(n-k)}{k+C(n-k} = \frac{\frac{1}{2(1-S)}CSn}{(1-S)n+CSn} \\
& = \frac{CS}{2(1-S)((C-1)S+1)} 
\end{align*}
giving the upper bound for Case 1.

\noindent\underline{Case 2: $\frac{1}{2} <S<1$} is proved similarly.

\end{proof}

\begin{Corollary}
Fix any rational number $0<S<1$.  If $0<S \leq \frac{1}{2}$, then for any rational $V$ with
\begin{equation*}
0<V< \frac{1}{2(1-S)}
\end{equation*}
there exists an election outcome with seat share $S$ and vote share $V$.

If $\frac{1}{2} < S < 1$, then for any rational $V$ with
\begin{equation*}
1-\frac{1}{2S} < V < 1
\end{equation*}
there exists an election outcome with seat share $S$ and vote share $V$.
\end{Corollary}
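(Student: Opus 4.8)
The plan is to obtain the Corollary as the ``$C\to\infty$'' case of Theorem~\ref{C>1Theorem}: as the permitted turnout ratio $C$ grows, the range of vote shares that Theorem~\ref{C>1Theorem} guarantees (for a fixed rational seat share $S$ and Declination $0$) expands, and the union of these ranges over all $C\ge 1$ is exactly the open interval in the statement. Concretely, fix rational $S$ with $0<S\le\frac12$ and a target rational $V$ with $0<V<\frac{1}{2(1-S)}$; I want to produce one finite rational $C>1$ such that $V$ lies in the interval Theorem~\ref{C>1Theorem} provides for that $C$, and then invoke that theorem, which already asserts that every rational value in its range is the vote share of an actual election outcome with seat share $S$, Declination $0$, and turnout ratio $C$.

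First I would record the relevant monotonicity and limits. Write $L(C)=\frac{S}{2(1-S)(C(1-S)+S)}$ for the lower endpoint from Theorem~\ref{C>1Theorem}, and, for $C>\frac{(1-S)^2}{S^2}$, write $U(C)=\frac{CS}{2(1-S)((C-1)S+1)}$ for the upper endpoint (while $U(C)=\frac12$ for smaller $C$). A short computation shows that $L(C)$ is strictly decreasing with $L(C)\to 0$ as $C\to\infty$, and that $U(C)$ is strictly increasing with $U(C)\to\frac{1}{2(1-S)}$; in particular $\bigcup_{C\ge 1}[L(C),U(C)]=\left(0,\frac{1}{2(1-S)}\right)$.

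Next, given $V$, I would solve the two scalar inequalities $L(C)<V$ and $U(C)>V$ for $C$: clearing the (positive) denominators turns each into a linear inequality in $C$, and in the second step the coefficient $1-2(1-S)V$ is positive precisely because $V<\frac{1}{2(1-S)}$; this yields finite thresholds $C_1,C_2$, and if $V\le\frac12$ the second inequality is automatic as soon as $C>\frac{(1-S)^2}{S^2}$. Choose any rational $C$ exceeding $\max\{C_1,C_2,1,(1-S)^2/S^2\}$; rationality is used only so that some collection of integer district turnouts realizes turnout ratio exactly $C$. Then $V\in[L(C),U(C)]$, and Theorem~\ref{C>1Theorem} furnishes the desired election outcome. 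The case $\frac12<S<1$ is handled the same way, with the lower endpoint of Theorem~\ref{C>1Theorem} now decreasing to $1-\frac{1}{2S}$ and the upper endpoint increasing to $1$; alternatively, and more cleanly, it follows from the first case by the symmetry $(V,S)\mapsto(1-V,1-S)$, which sends Declination-$0$ outcomes to Declination-$0$ outcomes (it is just the party-$B$ viewpoint) and carries the region $\{0<S\le\tfrac12,\ 0<V<\tfrac{1}{2(1-S)}\}$ onto $\{\tfrac12\le S<1,\ 1-\tfrac{1}{2S}<V<1\}$.

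There is no real obstacle: all the content lives in Theorem~\ref{C>1Theorem}, and what remains is the routine limit and monotonicity bookkeeping for the rational functions $L(C),U(C)$ together with the elementary algebra of the two thresholds. The only points that need a little care are staying in the correct sub-case of Theorem~\ref{C>1Theorem} (automatic once $C$ is large) and taking $C$ rational so that the turnout ratio is genuinely attainable by integer turnouts; one may also remark, for consistency with the openness of the interval, that $L(C)>0$ and $U(C)<\frac{1}{2(1-S)}$ for every finite $C$, so the endpoints are approached but lie in no single Theorem~\ref{C>1Theorem} range.
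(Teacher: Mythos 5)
Your proposal is correct and follows exactly the paper's route: the paper's entire proof is ``this follows directly from Theorem~\ref{C>1Theorem}, by taking limits as $C \to \infty$,'' and you have simply supplied the monotonicity and threshold bookkeeping (plus the optional $(V,S)\mapsto(1-V,1-S)$ symmetry remark) that the paper leaves implicit. No further comparison is needed.
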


\begin{proof}
This follows directly from Theorem \ref{C>1Theorem}, by taking limits as $C \to \infty$.  
\end{proof}

Please see Figure~\ref{PossibleV_and_S_d4TripleOverlay} (from Section \ref{ComparisonSection}) for a visualization of possible vote-share seat-share pairs for elections with $\delta=0$, with varying turnout ratios.

\section{Vote Share, Seat Share, and Turnout when $\delta \not=0$}\label{DNot0Section}

We note that most of the arguments from Sections \ref{SectionDeclination0} and \ref{SectionVSTDeclination0} can be carried out similarly for Declination values which are not 0.  Firstly, we use the difference of arctangent identity to see:
\begin{multline*}
\delta =  \frac{2}{\pi}\left(\arctan\left(\frac{2\overline{z}-1}{\frac{k'}{n}}\right) - \arctan\left(\frac{1-2\overline{y}}{\frac{k}{n}}\right) \right)
= \frac{2}{\pi}\arctan\left(\frac{\frac{2\overline{z}-1}{\frac{k'}{n}}-\frac{1-2\overline{y}}{\frac{k}{n}}}{1+\frac{2\overline{z}-1}{\frac{k'}{n}}\frac{1-2\overline{y}}{\frac{k}{n}} }\right) \\
=\frac{2}{\pi}\arctan\left(\frac{\frac{2\overline{z}-1}{S}-\frac{1-2\overline{y}}{1-S}}{1+\frac{2\overline{z}-1}{S}\frac{1-2\overline{y}}{1-S} }\right) = \frac{2}{\pi}\arctan\left(\frac{(1-S)(2\overline{z}-1)-S(1-2\overline{y})}{S(1-S)+(2\overline{z}-1)(1-2\overline{y})}\right)
\end{multline*}

Solving this equation for $\overline{y}$, and  $\overline{z}$, we have
\begin{align}
\overline{y} &= \frac{\tan\left(\frac{\delta \pi}{2}\right)(S(1-S)+2\overline{z}-1) + 1-2\overline{z}(1-S)}{2(S+\tan\left(\frac{\delta \pi}{2}\right)(2\overline{z}-1))} \label{general_y_equation} \\
\overline{z} &= \frac{\tan\left(\frac{\delta \pi}{2}\right)(S(1-S)-(1-2\overline{y}))+1-2S\overline{y}}{2(1-S-\tan\left(\frac{\delta \pi}{2}\right)(1-2\overline{y}))} \label{general_z_equation}
\end{align}

We give a few level curves of these equivalent equations in Figure~\ref{SLevelCurvesFigureDVariable}.

\begin{figure}
\begin{center}
\subfigure[$\delta = 0.1$]{\includegraphics[width=2.7in]{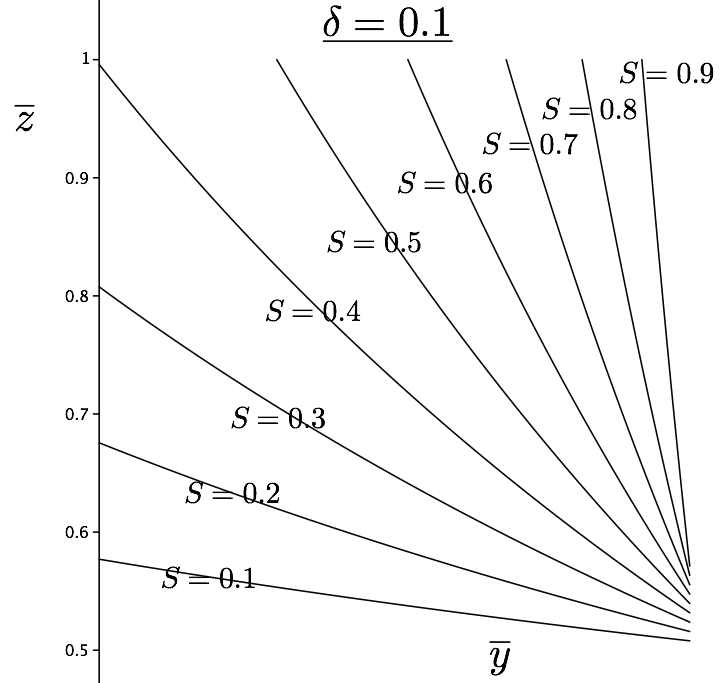}}
\hspace{.2 in}
\subfigure[$\delta = 0.2$]{\includegraphics[width=2.7in]{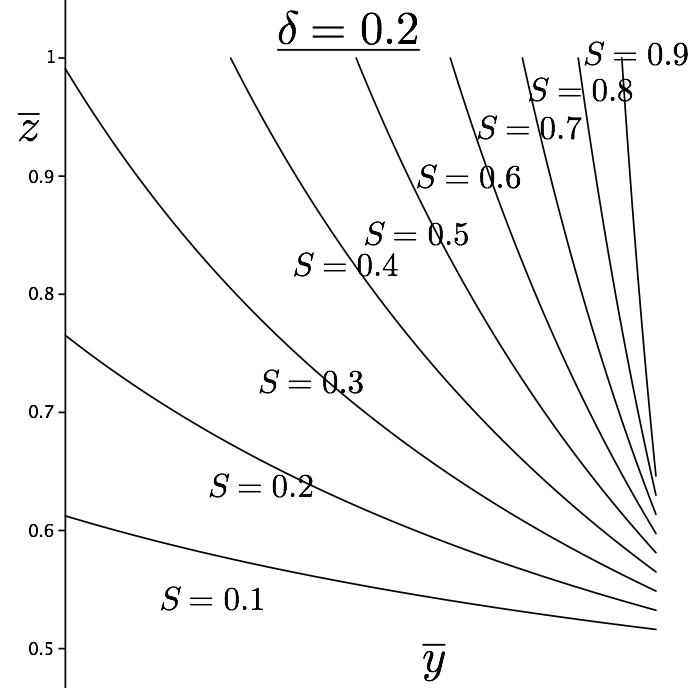}}
\end{center}
\caption{Values of $\overline{y}$ and $\overline{z}$ for fixed value of $S$.}
\label{SLevelCurvesFigureDVariable}
\end{figure}

In order to prove a statement parallel to Theorem \ref{IntervalTheorem} (and thus a chart parallel to Figure~\ref{yzRestrictionsFigure}), we first note that 
\begin{equation*}
\frac{\partial \overline{y}}{\partial \overline{z}} = - \frac{\left(\tan\left(\frac{\delta \pi}{2}\right)^2+1\right)(1-S)S}{\left(\tan\left(\frac{\delta \pi}{2}\right)(2\overline{z}-1)+S\right)^2} <0
\end{equation*}
so that we know that as $\overline{y}$ increases, $\overline{z}$ decreases.  Thus, using equations \eqref{general_y_equation} and \eqref{general_z_equation}, we can get a parallel to Theorem \ref{IntervalTheorem}, by finding the intersections of the intervals
\begin{align*}
0 & \leq \overline{y} \leq \frac{1}{2} \\
 \frac{\tan\left(\frac{\delta \pi}{2}\right)(S(1-S)+2\cdot 1-1) + 1-2\cdot 1(1-S)}{2(S+\tan\left(\frac{\delta \pi}{2}\right)(2\cdot 1-1))} & \leq \overline{y}  \leq  \frac{\tan\left(\frac{\delta \pi}{2}\right)(S(1-S)+2\cdot \frac{1}{2}-1) + 1-2\cdot \frac{1}{2}(1-S)}{2(S+\tan\left(\frac{\delta \pi}{2}\right)(2\cdot \frac{1}{2}-1))} \\
\frac{1}{2} & \leq \overline{z} \leq 1 \\
 \frac{\tan\left(\frac{\delta \pi}{2}\right)(S(1-S)-(1-2\cdot\frac{1}{2}))+1-2S\cdot\frac{1}{2}}{2(1-S-\tan\left(\frac{\delta \pi}{2}\right)(1-2\cdot\frac{1}{2}))} & \leq  \overline{z}  \leq  \frac{\tan\left(\frac{\delta \pi}{2}\right)(S(1-S)-(1-2\cdot 0))+1-2S\cdot 0}{2(1-S-\tan\left(\frac{\delta \pi}{2}\right)(1-2\cdot 0))}
\end{align*}

\begin{Example}\label{Firstd2Example}
To illustrate how we can use these inequalities, we will consider the case where
 $\delta = 0.2$.  Using the arguments outlined above, we have the following:

Choose $S$ to be any rational number between 0 and 1.  Depending on the value of $S$, make the following choices:
\begin{align*}
\text{if } 0&<S< \frac{1}{2}-\frac{1}{2}\left(\sqrt{5\tan^2\left(\frac{.2\pi}{2}\right)+4}-2\right)\cot\left(\frac{.2\pi}{2}\right) \\
& \text{choose } \overline{z} \text{ with }  \frac{\tan\left(\frac{.2\pi}{2}\right)S+1}{2} \leq \overline{z} \leq \frac{ \tan\left(\frac{.2\pi}{2}\right)(S(1-S)-1)+1}{2(1-S-\tan\left(\frac{.2\pi}{2}\right))}\\
\text{if } S &= \frac{1}{2}-\frac{1}{2}\left(\sqrt{5\tan^2\left(\frac{.2\pi}{2}\right)+4}-2\right)\cot\left(\frac{.2\pi}{2}\right) \\
&  \text{choose either } \overline{y} \text{ with } 0 \leq \overline{y} \leq \frac{1}{2} \text{ or choose } \overline{z} \text{ with } \frac{\tan\left(\frac{.2\pi}{2}\right)S+1}{2} \leq \overline{z} \leq 1 \\
\text{if }   & \frac{1}{2}-\frac{1}{2}\left(\sqrt{5\tan^2\left(\frac{.2\pi}{2}\right)+4}-2\right)\cot\left(\frac{.2\pi}{2}\right) < S < 1  \\
  & \text{choose } \overline{y} \text{ with }  \frac{\tan\left(\frac{.2\pi}{2}\right)(S(1-S)+1)+2S-1}{2\left(S+\tan\left(\frac{.2\pi}{2}\right)\right)} \leq \overline{y} \leq \frac{1}{2}
\end{align*}
Then there exists an election outcome consisting of those choices, with Declination equal to $\delta = 0.2$.

This gives Figure~\ref{yzRestrictionsdPoint2Figure}, which is similar to Figure~\ref{yzRestrictionsFigure} in the case where $\delta = 0.2$.

\begin{figure}[h]
\centering
\includegraphics[width=3.5in]{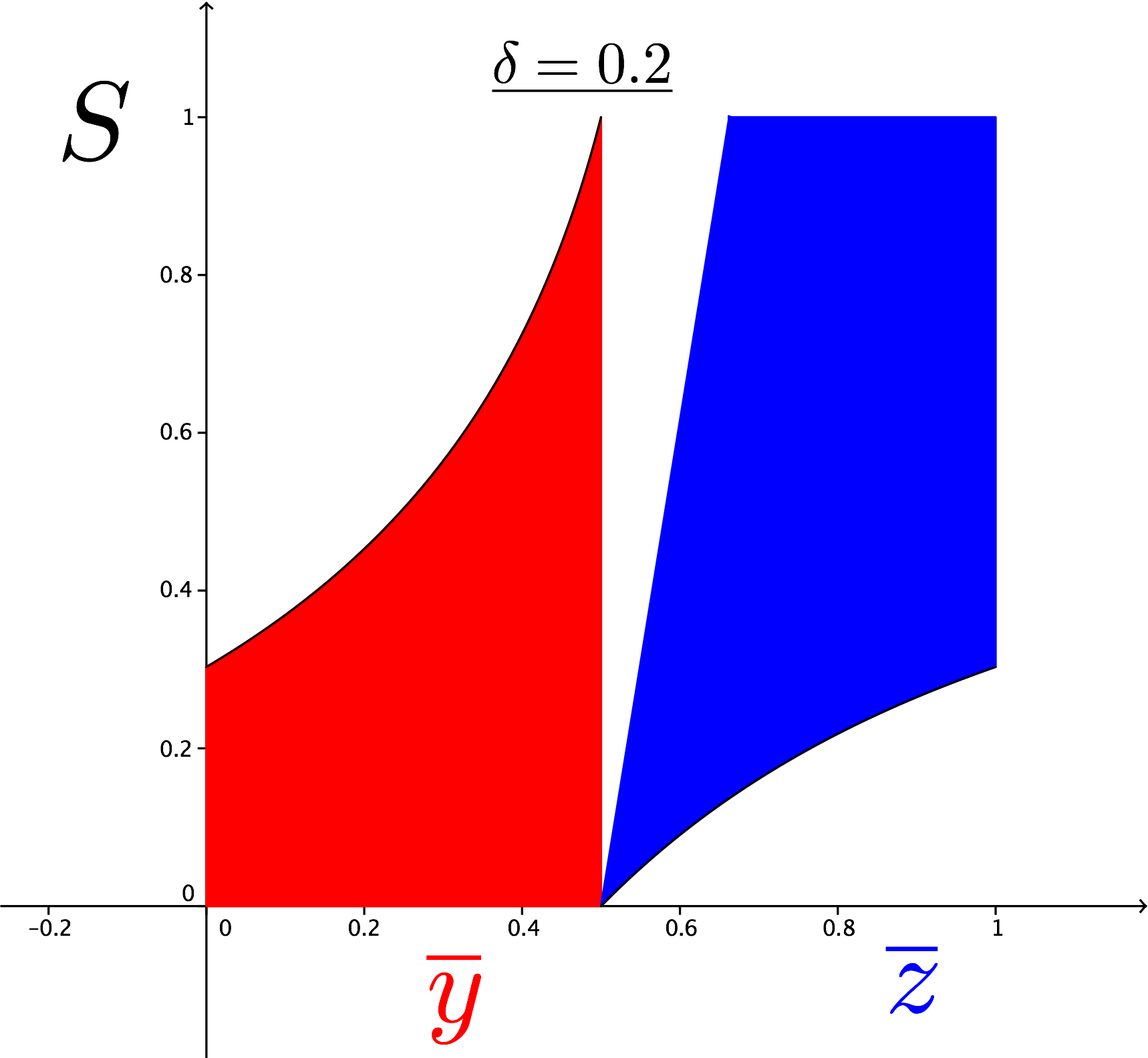}
\caption{Possible values for $\overline{y}$ and $\overline{z}$, given seat share $S$, $\delta = 0.2$.}
\label{yzRestrictionsdPoint2Figure}
\end{figure}

\end{Example}

In exploring possible $(S, V)$ pairs when $\delta \not=0$, the case for bounded $C\not=1$ is a bit more complicated and we will not address it.  In the case where $C$ is unbounded, we note that one can simply consider the extreme ranges of $\overline{y}$ and $\overline{z}$ and note that any $V$ within those extreme ranges is possible, based on weighting the turnout in districts won by party $A$ more heavily or less heavily.  From Example \ref{Firstd2Example}, we can conclude the following:

\begin{Example}

Choose $S$ to be any rational between 0 and 1, and set $\delta = 0.2$.  Then, 

\begin{equation*}
\text{if } 0<S\leq \frac{1}{2}-\frac{1}{2}\left(\sqrt{5\tan^2\left(\frac{.2\pi}{2}\right)+4}-2\right)\cot\left(\frac{.2\pi}{2}\right) 
\end{equation*}
the vote share $V$ can take on any rational value in the range
\begin{equation*}
0 < V \leq \frac{ \tan\left(\frac{.2\pi}{2}\right)(S(1-S)-1)+1}{2(1-S-\tan\left(\frac{.2\pi}{2}\right))}
\end{equation*}
and 
\begin{equation*}
\text{if } \frac{1}{2}-\frac{1}{2}\left(\sqrt{5\tan^2\left(\frac{.2\pi}{2}\right)+4}-2\right)\cot\left(\frac{.2\pi}{2}\right) <S <1
\end{equation*}
the vote share $V$ can take on any rational value in the range
\begin{equation*}
 \frac{\tan\left(\frac{.2\pi}{2}\right)(S(1-S)+1)+2S-1}{2\left(S+\tan\left(\frac{.2\pi}{2}\right)\right)} \leq V <1
\end{equation*}

The corresponding $(S,V)$ pairs can be seen in Figure~\ref{V_dpoint2_CunrestrictedFigure}.

\begin{figure}[h]
\centering
\includegraphics[width=3.5in]{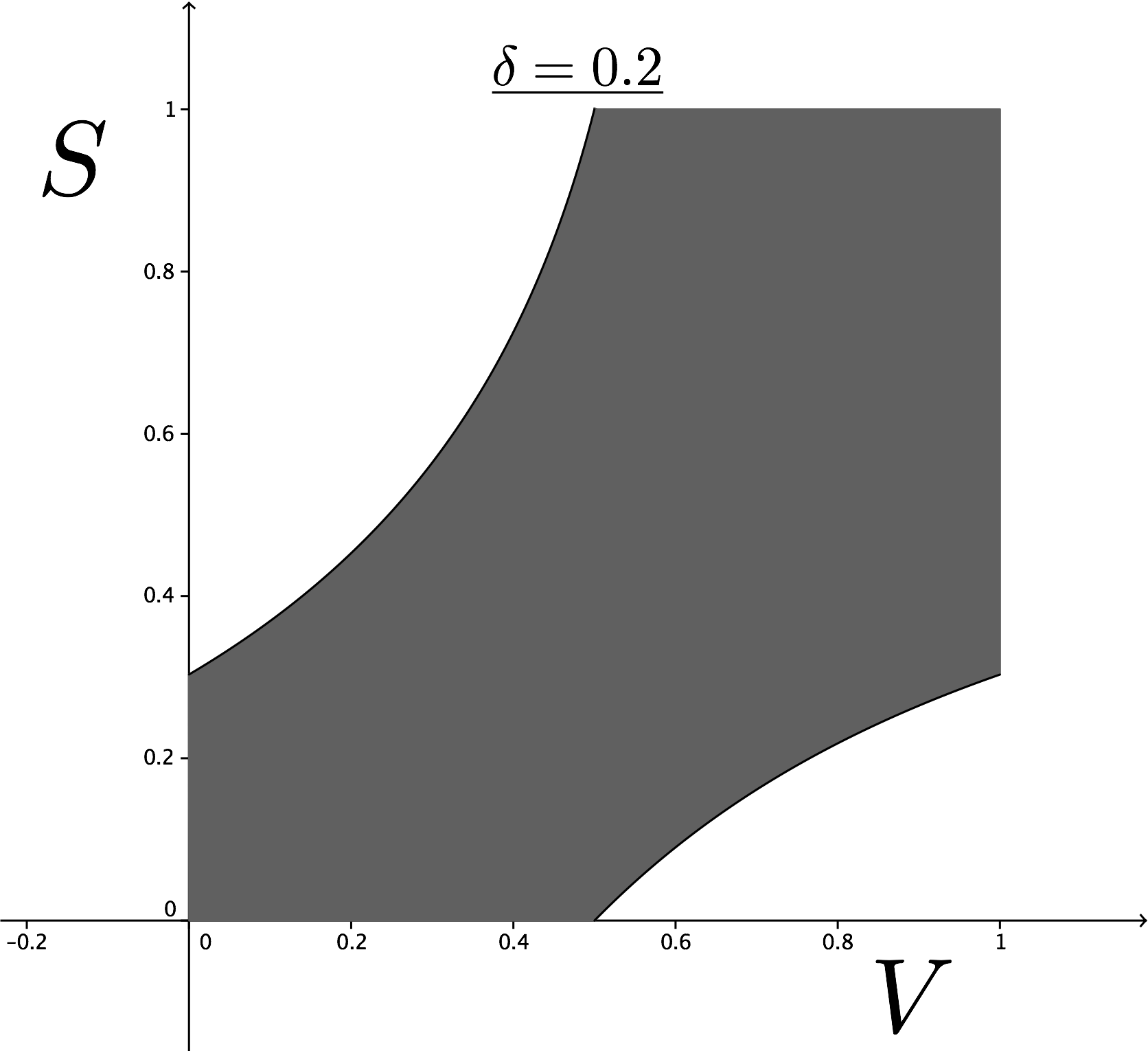}
\caption{Possible values for $V$ and $S$ for unrestricted $C$, $\delta = 0.2$.}
\label{V_dpoint2_CunrestrictedFigure}
\end{figure}

\end{Example}

In the case where turnout in each district is equal $C=1$ and $\delta \not=0$, we can make an argument similar to that in Section \ref{C1Section}.  That is, when $C=1$, we know that
\begin{equation*}
V = (1-S)\overline{y}+S \overline{z}
\end{equation*}
Using equation \eqref{general_y_equation} or \eqref{general_z_equation}, we can come up with two equivalent expressions for $V$:
\begin{align*}
V &= (1-S) \frac{\tan\left(\frac{\delta \pi}{2}\right)(S(1-S)+2\overline{z}-1) + 1-2\overline{z}(1-S)}{2(S+\tan\left(\frac{\delta \pi}{2}\right)(2\overline{z}-1))} +S \overline{z}  \\
V &= (1-S)\overline{y}+S \frac{\tan\left(\frac{\delta \pi}{2}\right)(S(1-S)-(1-2\overline{y}))+1-2S\overline{y}}{2(1-S-\tan\left(\frac{\delta \pi}{2}\right)(1-2\overline{y}))} 
\end{align*}

This is a bit more complicated than the case $\delta = 0$, because we no longer have the minimum and maximum values of $V$ occurring at the minimum/maximum values for $\overline{z}$ or $\overline{y}$.  Nevertheless, one can use Calculus to find where the minimum and maximum values of $V$ occur, and thus get a range of corresponding $V$s for a fixed $S$.

\begin{Example}
Let $S$ be a rational between 0 and 1, and let $\delta = 0.2$.  From the above arguments, we know that
\begin{align*}
V &= (1-S) \frac{\tan\left(\frac{\delta \pi}{2}\right)(S(1-S)+2\overline{z}-1) + 1-2\overline{z}(1-S)}{2(S+\tan\left(\frac{\delta \pi}{2}\right)(2\overline{z}-1))} +S \overline{z}  \\
\frac{\partial V}{\partial \overline{z}} &= S-\frac{\left(\tan^2\left(\frac{.2\pi}{2}\right)+1\right)(1-S)^2S}{\left(2\tan\left(\frac{.2\pi}{2}\right)\overline{z}-\tan\left(\frac{.2\pi}{2}\right)+S\right)^2} 
\end{align*}
Solving for $\frac{\partial V}{\partial\overline{z}} = 0$, we find that the only critical point for $0<S<1$ is  
\begin{equation*}
\overline{z} \approx 2.11803-3.15687S
\end{equation*}
From Example \ref{Firstd2Example}, we know we must have
\begin{equation*}
\frac{\tan\left(\frac{.2\pi}{2}\right)S+1}{2} \leq \overline{z} \leq 1
\end{equation*}
This implies that $\overline{z} \approx 2.11803-3.15687S$ is a critical point when $0.354158 \leq S \leq 0.487457$.  These observations give us possible $(S,V)$ pairs in Figure~\ref{VandS_dpoint2_C1Figure}.

\begin{figure}[h]
\begin{center}
\subfigure[Curves for $V$, using $\overline{z} =\frac{\tan\left(\frac{.2\pi}{2}\right)S+1}{2}$, $\overline{z}=1$, and the critical point $\overline{z} \approx  2.11803-3.15687S$ for $0.354158 \leq S \leq 0.487457$.  ]{\includegraphics[width=2.7in]{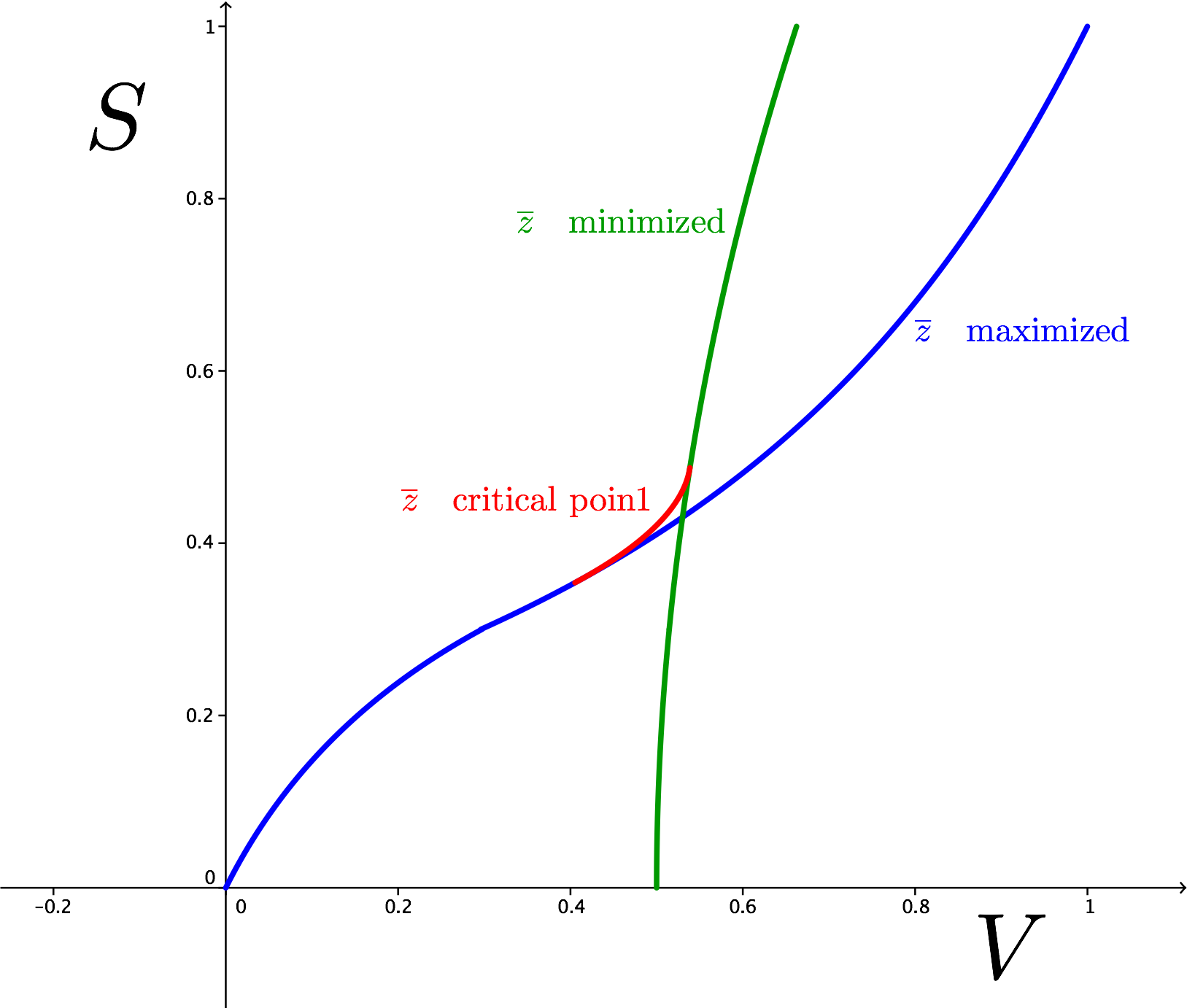}}
\hspace{.2 in}
\subfigure[Region is all $(S,V)$ pairs giving $\delta =0.2$ when $C=1$]{\includegraphics[width=2.7in]{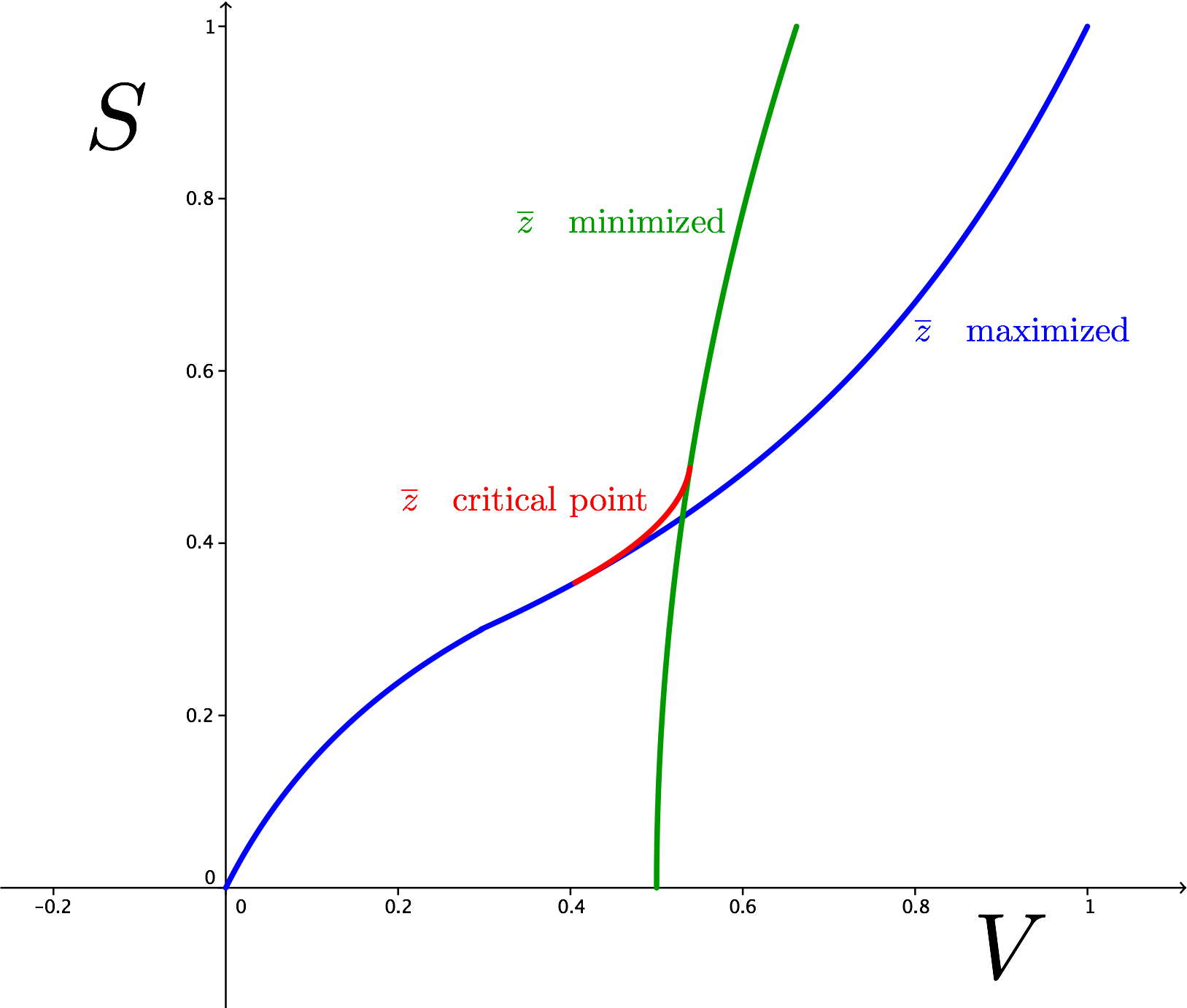}}
\end{center}
\caption{Curves and regions from the analysis when $\delta = 0.2$.}\label{VandS_dpoint2_C1Figure}
\end{figure}

\end{Example}

\section{Conclusions}\label{ConclusionSection}

The Declination is a metric which is based on the average vote share in districts won by a party, the average voter shares in districts lost by the party, and the way we expect the election results to look in a state with gerrymandered districts. It is a metric that has a wide variety of vote-share seat-share pairs that can correspond to an election with $\delta=0$, as seen in Figure~\ref{PossibleV_and_S_d4TripleOverlay}.  When comparing Figures \ref{PossibleV_and_S_EG4TripleOverlay} and \ref{PossibleV_and_S_d4TripleOverlay}  (and the overlay in Figure~\ref{EGandDOverlay}), one can easily see that the Declination can behave quite differently from the Efficiency Gap, and we saw some specific examples of that in Tables \ref{tabletEG=0} and \ref{EG0vD0}.  

The Declination avoids some of the pitfalls for which the Efficiency Gap has been critiqued.  It does not penalize direct proportionality $V = S$, and for any vote share $V$ there exists a seat share $S$ and an election outcome having the corresponding vote share and seat share, with $\delta=0$.   While it can be volatile in certain circumstances, it is \emph{not} volatile in the case where \emph{all} districts are toss-ups (which is where EG is most volatile).  Indeed, in the case where all districts have vote share nearly 50\%, any seat share (except for $S=0$ or $S=1$, in which case the Declination is not defined) will give a Declination value close to 0.

In order to make good use of any metric intended to detect gerrymandering, it is important to reflect further on which of these election conditions are desirable. Indeed, different states have expressed different motivations along these lines when making their maps.  For example, Arizona's redistricting commission has prioritizing \emph{competitiveness} as one of its central goals in the redistricting process.  This can give rise to instances where the Declination is volatile, as was seen in Table \ref{DeclinationVolatileTable}.  Missouri's recent constitutional amendment prioritize both Efficiency Gap close to zero and competitiveness in districting plans, which Table ~\ref{EG0vD0} shows might be at odds.    And in every state, various nuances such as voter distribution throughout the state and turnout rates affect how  varying metrics act on the same districting.

No metric is perfect.  The Declination cannot detect all kinds of packing and cracking.  And, just as with all other metrics relying only on election data, the Declination cannot differentiate between two maps having the same election outcome, but only one of which could have been re-drawn so as to have a \emph{different} outcome.  It is another useful metric which can be added to the toolbox of gerrymandering metrics, with an understanding of its limitations and eccentricities.

\section*{Acknowledgments}  The authors would like to heartily and humbly thank Greg Warrington, who encouraged this exploration and was extremely gracious and generous in sharing his data and software code.  We would also like to thank the Metric Geometry Gerrymandering group for organizing the San Francisco workshop, at which most of us met.

\bibliographystyle{plain}
\bibliography{arxivSubmit1}

\end{document}